\definecolor{pink1}{rgb}{0.858, 0.188, 0.478}
\newcommand{\e}{\mathrm{e}}
\newcommand{\Prob}[1][ ]{P #1} % EW: the mathbb should be reserved for fundamental sets, N, Z, R, etc.
\newcommand{\N}[1]{\mathcal{N}(#1)} %Neighborhood function 
\renewcommand{\|}{\,||\,}
\newcommand{\QED}{\quad \blacksquare}
\begin{document}

%Get Page numbers back for revising purposes
\pagestyle{plain} % comment out this line to remove page numbers

%Squish things a little bit...
\setlength{ \intextsep}{3pt}
\setlength{\abovedisplayskip}{3pt}
\setlength{\belowdisplayskip}{3pt}
\setlength{\topsep}{2pt}
\setlength{\textfloatsep}{0.1cm}

\title{Chemical Boltzmann Machines}
\titlerunning{Chemical Boltzmann Machines}  % abbreviated title (for running head)
%                                     also used for the TOC unless
%                                     \toctitle is used
%
\author{William Poole\inst{1}* \and Andr\'es Ortiz-Mu\~noz\inst{1}* \and Abhishek Behera\inst{2}* \and Nick S. Jones\inst{3} \and \\ Thomas E. Ouldridge\inst{3} \and Erik Winfree\inst{1} \and Manoj Gopalkrishnan\inst{2} }
\authorrunning{William Poole et al.} % abbreviated author list (for running head)
%
%%%% list of authors for the TOC (use if author list has to be modified)
\tocauthor{William Poole, Andr\'es Ortiz-Mu\~noz, Abhishek Behera, Nick S. Jones, Thomas E. Ouldridge, Erik Winfree, and Manoj Gopalkrishnan}

\institute{California Institute of Technology, Pasadena, CA\\
\email{wpoole@caltech.edu},
\and
India Institute of Technology Bombay, Mumbai, India \\
\email{manoj.gopalkrishnan@gmail.com}
\and
Imperial College London, London, England \\
\email{t.ouldridge@imperial.ac.uk}
}

%**NJ:so are we all giving our addresses? Not sure of the CS protocol. I'm nick.jones@imperial.ac.uk

\maketitle % typeset the title of the contribution
{\small \centering *Contributed Equally \\

}
%{\centering \today \ \ \currenttime
%
%}
\vspace{-.4cm}

\begin{abstract}
How smart can a micron-sized bag of chemicals be? How can an artificial or real cell make inferences about its environment? From which kinds of probability distributions can chemical reaction networks sample? We begin tackling these questions by showing four ways in which a stochastic chemical reaction network can implement a Boltzmann machine, a stochastic neural network model that can generate a wide range of probability distributions and compute conditional probabilities. The resulting models, and the associated theorems, provide a road map for constructing chemical reaction networks that exploit their native stochasticity as a computational resource. Finally, to show the potential of our models, we simulate a chemical Boltzmann machine to classify and generate MNIST digits in-silico.
%\keywords{Boltzmann Machines, Chemical Reaction Networks, Stochasticity, Sampling for Inference, Molecular Computation, Detailed Balance}
\end{abstract}
%\textcolor{Brown}{[manoj:I suggest Chemical Boltzmann Machines for the title. Boltzmann is a very common word when it comes to chemical reactions. A variant of Boltzmann machines is what we are talking about, but done in chemistry and not spin glasses.]}
%\textcolor{red}{[EW: Surprisingly, I agree with Manoj.]}

\vspace{-.6cm}
\section{Introduction}
\vspace{-.2cm}

To carry out complex tasks such as finding and exploiting food sources, avoiding toxins and predators, and transitioning through critical life-cycle stages, single-celled organisms and future cell-like artificial systems must make sensible decisions based on information about their environment ~\cite{Bray1995,Bray2009}.  The small volumes of cells makes this enterprise inherently probabilistic:  environmental signals and the biochemical networks within the cell are noisy, due to the stochasticity inherent in the interactions of small, diffusing molecules  ~\cite{McAdams1997,Elowitz2002,Perkins2009}.  The small volumes of cells raises questions not only about how stochasticity influences circuit function, but also about how much computational sophistication can be packed into the limited available space. 
%%**NJ we could say given noisy environment and limited data... "cells thus need to be careful statisticians". 
%NJ** So I guess cells should be Bayesian not because their observations are from noisy models/data -- but because their observations are based on small sample sizes? As such cells can get an advantage by using their past observations (in the form of priors or adapted link weights or memories) rather than just using their current observations. Given this advantage from priors a hardwired and deterministic maximum likelihood approach is disadvantageous: if a Bayesian approach is needed, however, it is likely to require sampling to calculate expectations relevant to decision making -- hence the win from being able to do this natively (and for free in equilibrium systems).

Perhaps surprisingly, neural network models provide an attractive architecture for the types of computation, inference, and information processing that cells must do.  Neural networks can perform deterministic computation using circuits that are smaller and faster than boolean circuits composed of AND, OR, and NOT gates~\cite{Muroga1971}, can robustly perform tasks such as associative recall~\cite{Hopfield1982}, and naturally perform Bayesian inference~\cite{Hinton1984}.  Furthermore, the  structure of biochemical networks, such as signal transduction cascades~\cite{Bray1990,Bray1995,Hellingwerf1995} and  genetic regulatory networks~\cite{Mjolsness1991,Mestl1996,Buchler2003,Deutsch2014,Deutsch2016}, can map surprisingly well onto neural network architectures.  Chemical implementations of neural networks and related machine learning models have also been proposed~\cite{Hjelmfelt1991,Hjelmfelt1992,Kim2004,Napp2013,Gopalkrishnan2016}, and limited examples demonstrated~\cite{Hjelmfelt1993,Kim2006,Kim2011,Qian2011}, for synthetic biochemical systems.

%\begin{wrapfigure}{R}{.5\textwidth}
\begin{figure}[tb!!!]
\includegraphics[width=.95\textwidth]{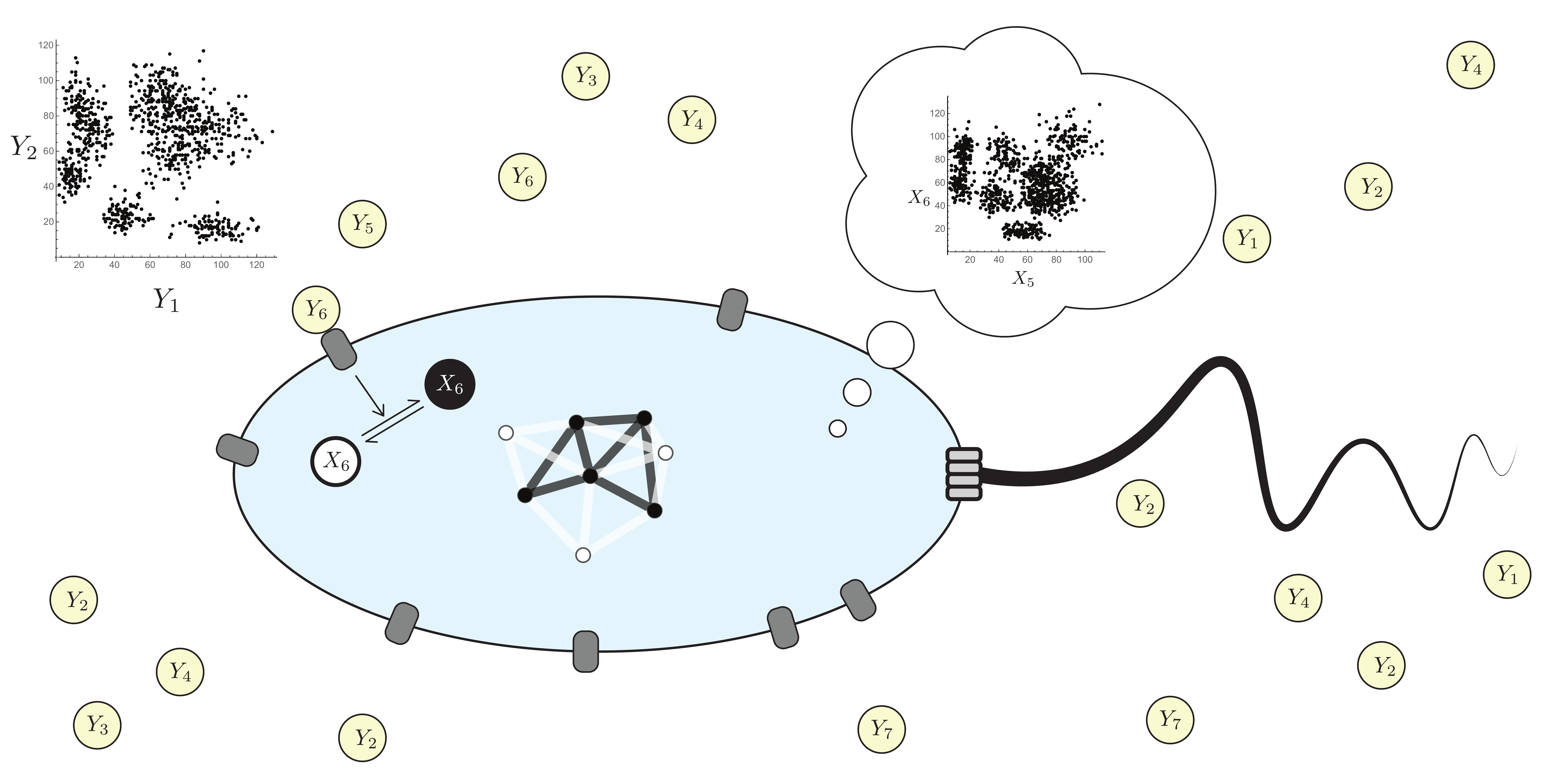}
\caption{In a micron-scale environment, molecular counts are low and a real (or synthetic) cell will have to respond to internal and environmental cues. Probabilistic inference using chemical Boltzmann machines provides a framework for how this may be achieved.}
\end{figure}
%\end{wrapfigure}

Previous work on biochemical neural networks and biochemical inference invoked models based on continuous concentrations of species representing neural activities.  Such models are limited in their ability to address questions of biochemical computation in small volumes, where
discrete stochastic chemical reaction network  models  must be used to account for the low molecular counts.  The nature of biochemical computation changes qualitatively in this context.
%[concentration models are valid in the limit of large counts and large volumes.  representing probabilities as concentrations requires one species per probability, hence high dimensional spaces can't feasibly be considered.  mean vs correlations. ]
%Therefore, here we consider the  (hypothetical) implementation of  a  prominent neural network architecture, the Boltzmann machine~\cite{Hinton1984,Ackley1985,Salakhutdinov2009}, using discrete stochastic chemical reaction networks operating with small molecular counts.
%
In particular, stochasticity has been widely studied in genetic regulatory networks \cite{Lestas2008}, signaling cascades \cite{Lestas2010}, population level bet hedging in bacteria~\cite{Veening2008}, and other areas \cite{Balazsi2011,Tsimring2014}
-- where the stochasticity is usually seen 
as a challenge limiting correct function, but is occasionally
also viewed as a useful resource~\cite{Eldar2010}. 
%Understanding how intrinsic chemical stochasticity can itself be a useful computational resource remains relatively unexplored, however \cite{Eldar2010}. Indeed, stochasticity at the cellular level has often been viewed as a challenge to be overcome via robust engineering (either synthetically or by evolution) \cite{Soloveichik2008}.
%In this paper we instead 
Our work falls squarely in the latter camp: we
attempt to exploit the intrinsic stochastic fluctuations of a formal chemical reaction network (CRN) to build natively stochastic samplers by implementing a stochastic neural network.
%(or graphical model) called a Boltzmann Machine (BM).
This links to efforts to build natively stochastic hardware for Bayesian inference~\cite{Mansinghka2009,Wang2016} and to the substantial literature attempting to model, and find evidence for, stochastic neural systems capable of Bayesian inference \cite{Fiser2010,Pouget2013}. 

%\textcolor{Orange}{NJ placed some relevant refs beneath the document fiser10, pouget2013probabilistic}.

%CRNs have been extensively studied in the deterministic limit \cite{Gunawardena2003}. Schemes have been developed by which a CRN can perform inference \cite{Napp2013,Gopalkrishnan2016} and it is known that CRNs can model any dynamical system \cite{Klonowski1983} \cite{Wilhelm2000}. This theoretical understanding has allowed deterministic neural networks to be implemented in-vitro \cite{Kim2004,Qian2011}. Stochastic CRNs, on the other hand, are less well understood. Although  CRNs can be programmed to produce certain probability distributions \cite{Baetica2015,Cardelli2016}, less is known about how to use stochastic CRNs to accomplish probabilistic computations. 

Specifically, we propose CRNs that implement Boltzmann machines (BMs), a flexible class of Markov random fields capable of generating diverse distributions and for which conditioning on data has straightforward physical interpretations \cite{Hinton1984,Ackley1985}. BMs are an established model of probabilistic neural networks due to their analytic tractability and identity with spin systems in statistical physics \cite{Tanaka1998} and Hopfield networks in computer science \cite{Hopfield1982}. These networks have been studied extensively and used in a wide range of applications including image classification \cite{Tang2011} and video generation \cite{Taylor2009}. 
%Indeed, BMs have previously been informally related to genetic regulatory networks~\cite{Buchler2003}. 
We prove that CRNs can implement BMs and that this is possible using detailed balanced CRNs. Moreover, we show that many of the attractive features of BMs can be applied to our CRN constructions such as inference, a straightforward learning rule and scalability to real-world data sets. We thereby introduce the idea of a chemical Boltzmann machine (CBM), a chemical system capable of exactly or approximately performing inference using a stochastically sampled high-dimensional state space, and explore some of its possible forms.

\vspace{-.2cm}
\section{Relevant Background}
\vspace{-.2cm}

\subsection{Boltzmann Machines (BMs):}
\vspace{-.1cm}
\label{sec:BM}
Boltzmann machines are a class of binary stochastic neural networks, meaning that each node randomly switches between the values 0 and 1 according to a specified distribution. They are widely used for unsupervised machine learning because they can compactly represent and manipulate high-dimensional probability distributions. Boltzmann machines provide a flexible machine learning architecture because, as generative models, they can be used for a diverse set of tasks including  data classification, data generation, and data reconstruction. Additionally, the simplicity of the model makes them analytically tractable. The use of hidden units (described below) allows Boltzmann machines to represent high order correlations in data. Together, these features make Boltzmann machines an excellent starting point for implementing stochastic chemical computers.

Fix a positive integer $N\in\mathbb{Z}_{>0}$. An $N$-node \textbf{Boltzmann machine} (BM) is specified by
a quadratic \textbf{energy} function $E:\{0,1\}^N\to\mathbb{R}$
\begin{equation}
E(x_1,x_2,\dots,x_N)=-\sum_{i<j}w_{ij}x_ix_j - \sum_i \theta_i x_i
\end{equation}
where $\theta_i\in\mathbb{R}$ is the \textbf{bias} of node $i$, and $w_{ij}=w_{ji}\in\mathbb{R}$ is the \textbf{weight} of the unordered pair $(i,j)$ of nodes, with $w_{ii} = 0$.  One may specify a BM \textbf{architecture}, or graph topology, by choosing additional weights $w_{ij}$ that are to be set to $0$. In this paper, we will use $\N{i} = \{j \textrm{ s.t. } w_{ij} \neq 0\}$ to denote the neighborhood of $i$. From a physical point of view, we are implicitly using {\it{temperature units}} $k_BT$ for energy, which we will continue to do throughout this paper. A BM describes a distribution $\Prob{(x)}$ over \textbf{state vectors} $x=(x_1,\dots, x_N)\in\{0,1\}^N$,
\begin{equation}
\label{BM Dist}
\Prob{(x)}=\frac{1}{Z}e^{-E(x)} \quad \textrm{with} \quad Z = \sum_{x' \in \{0, 1\}^N} e^{-E(x')}.
\end{equation}

Nodes of a BM are often partitioned into sets $V$ and $H$  of \textbf{visible} and \textbf{hidden}, respectively. Nodes in $V$ represent data, and auxiliary nodes in $H$ allow more complex distributions to be represented in the visible nodes. An \textbf{implementation} of a BM is a stationary stochastic process that samples from this distribution in the steady state. A BM can be implemented {\it in silico} using the \textbf{Gibbs sampling} algorithm \cite{Casella1992}, which induces a discrete time Markov chain (DTMC) on the state space $\{0,1\}^N$ in such a way that the stationary distribution of this Markov chain corresponds to the distribution $\Prob{(x)}$. In each round, one node $i\in\{1,\dots, n\}$ is chosen at random for update. For any two adjacent configurations $x$ and $x'$ which differ only at node $i$ --- i.e., $x_i = 1-x'_i$ and $x_j=x'_j$ for all $j\neq i$ --- we set the transition probabilities $T_{x \to x'}$ of the DTMC so that
\begin{equation}
\label{BMDetailedBalance}
\frac{T_{x' \to x}}{ T_{x \to x'}}
= \frac{\Prob{(x)}}{\Prob{(x')}} = \frac{\e^{-E(x)}}{\e^{-E(x')}} = \e^{\left(\theta_i + \sum_{j \in \N{i}} w_{ij} x_j \right)(x_i-x'_i)}.
\end{equation}
Any function $T_{x \to x'}$ can be chosen so long as (\ref{BMDetailedBalance}) is satisfied.
One common choice is $T_{x \to x'} = 1/(N (1+\e^{E(x')-E(x)}))$.
%Fig.~\ref{Fig_ExactConstructions}A graphically shows the Gibbs sampling procedure for a BM. \com{out of order figure.  Manoj doesn't recognize it as describing ``Gibbs sampling''.}

A Boltzmann machine is also an inference engine. One can do inference on $\Prob{(x)}$ by conditioning on the values of a subset of the nodes. Suppose nonempty node subsets $U$ and $Y$ form a partition of the nodes $\{1,2,\dots,N\}$, and fix $u\in\{0,1\}^U$. To obtain samples from $\Prob{(y \mid u)}$ where $y\in\{0,1\}^Y$, one \textbf{clamps} every node $i\in U$ to the state $u_i$ while running Gibbs sampling, i.e., one does not allow these nodes to update. Clamping nodes to an input state is the same as specifying the input data for a statistical model. Steady state samples $y\in\{0,1\}^Y$ of this procedure are draws from the distribution $\Prob{(y\mid u)}$.

Boltzmann machines can be used to learn a generative model from unlabeled data. After specifying the architecture, one then proceeds to find the weights, $w_{ij}$, and biases, $\theta_i$, that maximize the likelihood of the observed data according to the model, using gradient descent from a random initial parameterization. This reduces to a very simple two-phase learning rule  where weights on active edges are strengthened in a ``wake phase'' during which the network is clamped to observed data and  are weakened in a ``sleep phase'' during which the network runs free~\cite{Hinton1984,Ackley1985}.  Given a target distribution $Q(x)$, this gradient descent corresponds to calculating the gradient of the Kullback-Leibler divergence from $\Prob$ to $Q$, 
$D_{KL}(Q\|\Prob) = \sum_x Q(x) \log \frac{Q(x)}{\Prob(x)}$, 
with respect to the parameters $\theta_i$ and $w_{ij}$:
\begin{equation}
\frac{d \theta_i}{dt}=
-\frac{\partial D_{KL}}{\partial 
\theta_i} = \langle x_i \rangle_{Q} - 
\langle x_i \rangle_{\Prob}
\quad \textrm{and} \quad
\frac{d w_{ij}}{dt} =
-\frac{\partial D_{KL}}{\partial w_{ij}} = \langle x_i x_j \rangle_{Q} - \langle x_i x_j \rangle_{\Prob}
\label{eq:Hebbian}
\end{equation}
where $\langle \cdot \rangle_{\Prob}$ and $\langle \cdot \rangle_Q$ 
denote expected values with respect to the 
distributions $\Prob$ and $Q$ respectively.
When hidden units are present, the distribution $Q$ (which is defined on visible units only) is extended to hidden units based on clamping the visible units according to $Q$ and using the conditional distribution $P(y|u)$ to determine the hidden units.

\vspace{-.2cm}
\subsection{Chemical Reaction Networks (CRNs):}
\vspace{-.1cm}

Fix a finite set $\mathcal{S}=(S_1, S_2, \dots, S_M)$ of $M$ \textbf{species}. A \textbf{reaction} $r$ is a formal chemical equation
\begin{equation}
\sum_{i = 1}^M 
\mu_r^i S_i \rightarrow \sum_{i=1}^M \nu_r^i S_i,
\end{equation}
abbreviated as $r=\mu_r \to \nu_r$ where $\mu_r,\nu_r\in \mathbb{N}^\mathcal{S}$ are the stoichiometric coefficient vectors for the reactant and product species respectively, and $\mathbb{N} = \mathbb{Z}_{\geq 0}$. A \textbf{reaction rate constant}, $k_r \in \mathbb{R}_{>0}$, is associated with each reaction. In this paper, we define a \textbf{chemical reaction network} (CRN) as a triple $\mathcal{C} = (\mathcal{S}, \mathcal{R}, k)$ where $\mathcal{S}$ is a finite set of
species, and $\mathcal{R}$ is a set of reactions, and $k$ is the associated set of reaction rate constants.

We will denote chemical species by capital letters, and their counts by lower case letters; e.g., $s_1$ denotes the number of species $S_1$. Thus the state of a stochastic CRN is described by a vector on a discrete lattice, $s = (s_1, s_2 \dots s_M ) \in \mathbb{N}^\mathcal{S}$. 
The dynamics of a stochastic CRN are as follows~\cite{gillespie2007stochastic}.  The probability that a given reaction occurs per unit time, called its \textbf{propensity}, is given by 
\begin{equation}
\rho_r(s) = k_r\prod_{i=1}^M \frac{s_i!}{(s_i - \mu^i_r)!}
\quad \textrm{if} \quad s_i \ge \mu^i_r \quad \textrm{and $0$ otherwise}.
\label{eq:CRN2}
\end{equation}
Each time a reaction fires, state $s$ changes to state $s+\Delta_r$, where $\Delta_r = \nu_r - \mu_r$ is called the reaction vector, and the propensity of each reaction may change.
Viewed from a state space perspective, stochastic CRNs are continuous time Markov chains (CTMCs) with transition rates 
\begin{equation}
R_{s \to s'} = \sum_{r \textrm{\ s.t.\ } s' = s + \Delta_r} \rho_r(s)
\end{equation}
and thus their dynamics follow
\begin{equation}
\frac{d\Prob(s,t)}{d t} = 
\sum_{s' \neq s} R_{s' \to s} P(s',t) - R_{s \to s'} P(s,t)\ \ ,
\end{equation}
where $\Prob(s,t)$ is the probability of a state with counts $s$ at time $t$.
Equivalently, they are
governed by the \textbf{chemical master equation}, 
\begin{equation}
\frac{d\Prob(s,t)}{d t} = 
\sum_{r \in \mathcal{R}}\Prob(s-\Delta_r,t)\rho_r(s-\Delta_r)-\Prob(s,t)\rho_r(s) \ \ .
\end{equation}
A stationary distribution $\pi(s)$ may be found by solving $\frac{d \Prob(s,t)}{d t} = 0$ simultaneously for all $s$; in general, it need not be unique, and even may not exist. Given an initial state $s_0$, $\pi(s)=P(s,\infty)$ is unique if it exists. For that initial state, the \textbf{reachability class} $\Omega_{s_0} \subseteq \mathbb{N}^M $ is the maximal subset of the integer lattice accessible to the CRN via some sequence of reactions in $\mathcal{R}$. 
%As a shorthand, specifying a CRN and a reachability class is tantamount to specifying the CRN and any initial state that gives rise to the specified reachability class.
We will specify a CRN and a reachability class given an initial state as a shorthand for specifying a CRN and a set of initial states with identical reachability classes.

\vspace{-.2cm}
\subsection{Detailed Balanced Chemical Reaction Networks:}
\vspace{-.1cm}

A CTMC is said to satisfy \textbf{detailed balance} if there exists a well-defined function of state $s$, $E(s) \in \mathbb{R}$, such that for every pair of states $s$ and $s'$, the transition rates $R_{s \to s'}$ and $R_{s' \to s}$ are either both zero or 
\begin{equation}
\frac{R_{s \to s'}}{R_{s' \to s}} = \e^{E(s)-E(s')} \ . 
\label{eq:dbcrn1}
\end{equation}
If the state space $\Omega$ is connected and the partition function $Z = \sum_{s \in \Omega} \e^{-E(s)}$ is finite, then the steady state distribution $\pi(s) = \frac{1}{Z} \e^{-E(s)}$ is unique, and the net flux between all states is zero in that steady state.

There is a related but distinct notion of detailed balance for a CRN.
An equilibrium chemical system is constrained by physics to obey detailed balance at the level of each reaction. In particular, for a dilute equilibrium system,  each species $S_i \in \mathcal{S}$ has an energy $G[S_i] \in \mathbb{R}$ that relates to its intrinsic stability, and 
\begin{equation}
\frac{k_{r^+}}{k_{r^-}} = \e^{-\sum_{i=1}^M \Delta_{r^+}^i G[S_i]} =  \e^{-\Delta G_{r^+}},
\label{eq:dbcrn2}
\end{equation} 
where $\Delta_{r^+}^i $ is the $i$th component of $\Delta_{r^+} = \nu_{r^+} - \mu_{r^+} $, and we have defined $\Delta G_{r^+} = \sum_{i=1}^N \Delta_{r^+}^i G[S_i]$. 
Any CRN for which there exists a function $G$ satisfying (\ref{eq:dbcrn2}) is called a detailed balanced CRN.
To see that the CTMC for a detailed balanced CRN also itself satisfies detailed balance in the sense of (\ref{eq:dbcrn1}),
let $s' = s + \Delta_{r+}$ and
note that (\ref{eq:CRN2}) and (\ref{eq:dbcrn2}) imply that
\begin{equation}
\label{eq:dbcrn3}
\frac{\rho_{r^+}(s)}{\rho_{r^-}(s')} = \e^{\mathcal{G}(s)- \mathcal{G}(s')}
\quad \textrm{with} \quad
\mathcal{G}(s) = \sum_{i=1}^{M} s_i G[S_i] + \log (s_i!),
\end{equation}
for all reactions $r^+$. Here, $\mathcal{G}(s)$ is a well-defined function of state $s$ (the free energy) that can play the role of $E$ in (\ref{eq:dbcrn1}). If there are multiple reactions that bring $s$ to $s'$, they all satisfy (\ref{eq:dbcrn3}), and therefore  the ratio $R_{s \to s'}/R_{s' \to s}$ satisfies (\ref{eq:dbcrn1}) and the CTMC satisfies detailed balance.

It is possible to consider non-equilibrium CRNs that violate (\ref{eq:dbcrn2}). Such systems must be coupled to implicit  reservoirs of fuel molecules that can drive the species of interest into a non-equilibrium steady state \cite{Qian2007,Beard2008,Ouldridge2017}. Usually -- but not always \cite{joshi2013detailed,Erez2017} -- the resultant Markov chain violates detailed balance. In Section~\ref{sec:constructions1}, we shall consider a system that exhibits detailed balance at the level of the Markov chain, but is necessarily non-equilibrium and violates detailed balance at the detailed chemical level. 

Given an initial condition $s_0$, a detailed balanced CRN will be confined to a single reachability class $\Omega_{s_0}$.  Moreover, from the form of $\mathcal{G}(s)$,  the stationary distribution $\pi(s)$ on $\Omega_{s_0}$ of any detailed balanced CRN exists, is unique, and is a product of Poisson distributions restricted to the reachability class \cite{Anderson2010}, 
\begin{equation}
\label{ProductPoissonThrm}
\pi(s) = \frac{1}{Z}\e^{-\mathcal{G}(s)} = \frac{1}{Z}\prod_{i=1}^M \frac{\e^{-s_i G[S_i]}}{s_i!},
\end{equation}
with the partition function 
$
Z = \sum_{s' \in \Omega_{s_0}}  \e^{-\mathcal{G}(s')}
$
dependent on the reachability class. 
Note that this implies that the partition function is always finite, even for an infinite reachability class.

\vspace{-.2cm}
\section{Exact Constructions and Theorems}
\vspace{-.1cm}
\subsection{Clamping and Conditioning with Detailed Balanced CRNs:}
\vspace{-.1cm}
\label{sec:clamping_db_crns}

In a Boltzmann machine that has been trained to generate a desired probability distribution when run, inference can be performed by freezing, also known as clamping, the values of known variables, and running the rest of the network to obtain a sample; this turns out to exactly generate the conditional probability.  A similar result holds for a subclass of detailed balanced CRNs that generate a distribution, for an appropriate notion of clamping in a CRN.  Imagine a ``demon'' that, whenever a reaction results in a change in the counts of one of the clamped species, will instantaneously change it back to its previous value.  If every reaction is such that either no clamped species change, or else every species that changes is clamped, then the demon is effectively simply ``turning off'' those reactions.
%Here we define a clamping procedure for a catalytic CRN which is analogous to clamping in a BM. The idea is to ensure that clamped species do not change but can still effect the CRN catalytically. 
More precisely, consider a CRN, $\mathcal{C} = (\mathcal{S}, \mathcal{R}, k)$. We will partition the species into two disjoint groups $Y = \mathcal{S}_{free}$ and $U = \mathcal{S}_{clamped}$, where $\mathcal{S}_{free}$ will be allowed to vary and $\mathcal{S}_{clamped}$ will be held fixed. We will define free reactions, $\mathcal{R}_{free}$, as reactions which result in neither a net production nor a net consumption of any clamped species. Similarly, clamped reactions, $\mathcal{R}_{clamped}$ are reactions which change the counts of any clamped species. 
The clamped CRN will be denoted $\mathcal{C}|_{U=u}$ to indicate the the species $U_i \in U$ have been clamped to the values $u_i$. The clamped CRN is defined by $\mathcal{C}|_{U=u} = (\mathcal{S}, \mathcal{R}_{free}, k_{free})$, that is, the entire set of species along with the reduced set of reactions and their rate constants. In the clamped CRN it is apparent that the clamped species will not change from their initial conditions because no reaction in $\mathcal{R}_{free}$ can change their count. However, these clamped species may still affect the free species catalytically.  If the removed reactions, $\mathcal{R}_{clamped}$, never change counts of non-clamped species, then $\mathcal{C}|_{U=u}$ is equivalent to the action of the ``demon'' imagined above.

% Further note that if there exist reactions that change counts of both clamped and free species, these reactions would be ``turned off'' in our formalism -- whereas in contrast the demon would still allow them to proceed, with the net effect of changing only the free species counts. In such CRNs, we do not expect clamping to be the same as conditioning.

We use equation \ref{ProductPoissonThrm} to prove that clamping a detailed balanced CRN is equivalent to calculating a conditional distribution, and to show when the conditional distributions of a detailed balanced CRN will be independent. Together, these theorems provide guidelines for devising detailed balanced CRNs with interesting (non-independent) conditional distributions and for obtaining samples from these distributions via clamping.

We will need one more definition.  Let $\mathcal{C}$ be a detailed balanced CRN with reachability class $\Omega_{s_0}$ for some initial condition $s_0 = (u_0, y_0)$. Let $\Gamma_{s_0}$ be the reachability class of the clamped CRN $\mathcal{C}|_{U=u_0}$ with species $U$ clamped to $u_0$ and species $Y$ free. We say clamping \textbf{preserves reachability} if $\Omega_{s_0|U=u_0}^Y = \Gamma_{s_0}^Y$ where $\Omega^Y_{s_0|U=u_0} = \{ y \textrm{ s.t. } (u_0,y) \in \Omega_{s_0} \}$ and $\Gamma^Y_{s_0} = \{ y \textrm{ s.t. } (u_0,y) \in \Gamma_{s_0} \}$. In other words, clamping preserves reachability if, whenever a state $s = (u_0,y)$  is reachable from $s_0$ by any path in $\mathcal{C}$, then it is also reachable from $s_0$ by some path in $\mathcal{C}|_{U=u_0}$ that never changes $u$.

\begin{theorem}
\label{theorem:detailed_balanced_clamping}
Consider a detailed balanced CRN $\mathcal{C} = (\mathcal{S}, \mathcal{R},k)$ with reachability class $\Omega_{s_0}$ from initial state $s_0$. Partition the species into two disjoint sets $U=\{U_1,\dots,U_{M_u}\} \subset \mathcal{S} \textrm{ and } Y=\{Y_1, \dots, Y_{M_y}\} \subset \mathcal{S}$ with $M_u+M_y=M=|\mathcal{S}|$. Let the projection of $s_0$ onto $U$ and $Y$ be $u_0$ and $y_0$. The conditional distribution $\Prob{(y\mid u)}$ implied by the stationary distribution $\pi$ of $\mathcal{C}$ is equivalent to the stationary distribution of a clamped CRN, $\mathcal{C}|_{U=u}$ starting from initial state $s_0$ with $u_0 = u$, provided that clamping preserves reachability.
\end{theorem}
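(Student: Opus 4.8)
The plan is to invoke the product-of-Poissons characterization of stationary distributions for detailed balanced CRNs, equation~(\ref{ProductPoissonThrm}), once for $\mathcal{C}$ and once for the clamped network $\mathcal{C}|_{U=u_0}$, and then to check that the two resulting distributions coincide. Since $\mathcal{C}$ is detailed balanced, its stationary distribution $\pi$ on $\Omega_{s_0}$ exists and is unique, so $\Prob{(y\mid u_0)}=\pi(u_0,y)/\sum_{y'}\pi(u_0,y')$ is well defined, the sum over $y'\in\Omega^Y_{s_0|U=u_0}$ being positive because $s_0=(u_0,y_0)\in\Omega_{s_0}$.

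First I would show that $\mathcal{C}|_{U=u_0}$ is itself detailed balanced. Its reaction set $\mathcal{R}_{free}$ is obtained from $\mathcal{R}$ only by deleting reactions, and it is closed under reaction reversal: if $\Delta_r$ vanishes on the clamped coordinates then so does $\Delta_{r^-}=-\Delta_r$. Hence any species-energy assignment $G[S_i]$ witnessing (\ref{eq:dbcrn2}) for $\mathcal{C}$ also witnesses it for $\mathcal{C}|_{U=u_0}$, so by (\ref{ProductPoissonThrm}) the clamped CRN has a unique stationary distribution on its reachability class $\Gamma_{s_0}$, namely $\pi'(s)=\e^{-\mathcal{G}(s)}/Z'$ with $Z'=\sum_{s'\in\Gamma_{s_0}}\e^{-\mathcal{G}(s')}$ and the \emph{same} free energy $\mathcal{G}$ as for $\mathcal{C}$.

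Next, because no reaction in $\mathcal{R}_{free}$ alters the clamped counts, every state of $\Gamma_{s_0}$ has $U$-part equal to $u_0$, i.e.\ $\Gamma_{s_0}=\{u_0\}\times\Gamma^Y_{s_0}$. From the form of $\mathcal{G}$ in (\ref{eq:dbcrn3}),
\[
\mathcal{G}(u_0,y) = \underbrace{\sum_{i\in U}\bigl(u_{0,i}G[U_i]+\log u_{0,i}!\bigr)}_{\text{depends only on }u_0} \ +\ \underbrace{\sum_{j\in Y}\bigl(y_jG[Y_j]+\log y_j!\bigr)}_{=:\ \mathcal{G}_Y(y)},
\]
and the $y$-independent first term cancels against the normalization, so $\pi'(u_0,y)\propto\e^{-\mathcal{G}_Y(y)}$ on $y\in\Gamma^Y_{s_0}$. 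The identical cancellation applied to $\pi$ gives $\Prob{(y\mid u_0)}\propto\e^{-\mathcal{G}_Y(y)}$ on $y\in\Omega^Y_{s_0|U=u_0}$. Thus both distributions are the normalized Boltzmann weight of $\mathcal{G}_Y$ restricted to their respective supports.

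The last step is to match those supports, and this is the only place the hypothesis enters. Since $\mathcal{R}_{free}\subseteq\mathcal{R}$, every trajectory of $\mathcal{C}|_{U=u_0}$ is also a trajectory of $\mathcal{C}$, so $\Gamma^Y_{s_0}\subseteq\Omega^Y_{s_0|U=u_0}$ holds unconditionally; the assumption that clamping preserves reachability supplies the reverse inclusion, hence the supports are equal and the two normalized distributions are identical. I expect this reachability bookkeeping to be the main obstacle, and the reason the hypothesis is needed: if clamping did not preserve reachability, the clamped dynamics could be trapped in a strict subset of the conditioned support, and the normalizing constants — hence the distributions — would genuinely differ. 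Everything else is a direct substitution into (\ref{ProductPoissonThrm}).
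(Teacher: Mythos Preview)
Your proof is correct and follows essentially the same approach as the paper: apply (\ref{ProductPoissonThrm}) to both $\mathcal{C}$ and $\mathcal{C}|_{U=u_0}$, observe that removing reversible pairs of reactions keeps the clamped CRN detailed balanced with the same $\mathcal{G}$, split $\mathcal{G}(u_0,y)$ into a $u_0$-part and a $y$-part so that the $u_0$-part cancels in the normalization, and use the reachability-preservation hypothesis to identify the supports. Your write-up is in fact more careful than the paper's in spelling out why $\mathcal{R}_{free}$ is closed under reversal and why the inclusion $\Gamma^Y_{s_0}\subseteq\Omega^Y_{s_0|U=u_0}$ holds automatically.
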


\begin{proof} 
We have $\mathcal{G}(u, y) = \sum_{i=1}^{M_u} u_i G[U_i]+\log (u_i !) + \sum_{i=1}^{M_y} y_i G[Y_i]+\log (y_i !)$.  
Let the reachability class of $\mathcal{C}|_{U=u}$ be $\Gamma_{s_0}$, its projection onto $Y$ be $\Gamma_{s_0}^Y$, and $\Omega^Y_{s_0|U=u_0} = \{ y \textrm{ s.t. } (u_0,y) \in \Omega_{s_0} \}$ with $\Omega^Y_{s_0|U=u_0} = \Gamma_{s_0}^Y$.
Then, the conditional probability distribution of the unclamped CRN is given by
\begin{equation}
\label{clamping theorem eq 1}
\Prob{(y\mid u)} =
\frac{\pi(u,y)}{\sum_{y^\prime \in \Gamma^Y_{s_0}} \pi(u, y^\prime)} =
\frac{\e^{-\mathcal{G}(u,y)}}
{\sum_{y^\prime \in \Gamma^Y_{s_0}} \e^{-\mathcal{G}(u, y^\prime)}} \ .
\end{equation}
Simply removing pairs of forwards and backwards reactions will preserve detailed balance for unaffected transitions, and hence the clamped system remains a detailed balanced CRN with the same free energy function. We then  readily see that the clamped CRN's stationary distribution, $\pi_{c}(y|u)$ is given by
\begin{equation}
\label{clamping theorem eq 2}
\pi_{c}(y|u) = 
\frac{\e^{-\mathcal{G}(u, y)}}{Z_c(u)}
\quad \textrm{ with }
\quad Z_c(u) = \sum_{y' \in \Gamma^Y_{s_0}} \e^{-\mathcal{G}(u, y')}\ . \quad \QED
\end{equation} 

The original CRN and the clamped CRN do not need to have the same initial conditions as long as the initial conditions have the same reachability classes. However, even if the two CRNs have the same initial conditions, it is possible that the clamping process will make some part of $\Omega^Y_{s_0|U=u_0}$ inaccessible to $C|_{U=u}$, in which case this theorem will not hold. 

\end{proof}

\begin{theorem}
Assume the reachability class of a detailed balanced CRN can be expressed as the product of subspaces, $\Omega_{s_0} = \prod_{j=1}^L \Omega_{s_0}^j$. Then the steady-state distributions of each subspace will be independent for each product space: $\pi(s) = \prod_{j=1}^L \pi^j(s^j)$, where $s = (s^1,\dots,s^L)$ and $\pi^j$ is the distribution over $\Omega_{s_0}^j$.
\label{theorem:poisson2}
\end{theorem}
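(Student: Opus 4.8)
The plan is to reduce the claim directly to the explicit product-of-Poissons form of the stationary distribution of a detailed balanced CRN, equation~(\ref{ProductPoissonThrm}). The hypothesis $\Omega_{s_0} = \prod_{j=1}^L \Omega_{s_0}^j$, read together with the notation $s = (s^1,\dots,s^L)$, means that the species set $\mathcal{S}$ is partitioned into disjoint blocks $\mathcal{S}^1,\dots,\mathcal{S}^L$, with $s^j \in \mathbb{N}^{\mathcal{S}^j}$ the restriction of the count vector to block $j$ and $\Omega_{s_0}^j \subseteq \mathbb{N}^{\mathcal{S}^j}$ the reachability class of that block; making this reading explicit is the first step.

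Next I would use the fact that the free energy $\mathcal{G}(s) = \sum_{i=1}^M \left( s_i G[S_i] + \log(s_i!) \right)$ appearing in~(\ref{eq:dbcrn3}) is a sum of single-species terms, so it splits along the blocks as $\mathcal{G}(s) = \sum_{j=1}^L \mathcal{G}^j(s^j)$ with $\mathcal{G}^j(s^j) = \sum_{i \in \mathcal{S}^j} \left( s_i G[S_i] + \log(s_i!) \right)$, and therefore $\e^{-\mathcal{G}(s)} = \prod_{j=1}^L \e^{-\mathcal{G}^j(s^j)}$. The one computational step is then to factor the partition function: because $\Omega_{s_0}$ is a Cartesian product, $Z = \sum_{s \in \Omega_{s_0}} \e^{-\mathcal{G}(s)}$ separates into $Z = \prod_{j=1}^L Z^j$ with $Z^j = \sum_{s^j \in \Omega_{s_0}^j} \e^{-\mathcal{G}^j(s^j)}$, each $Z^j$ finite and positive since $Z$ is. Dividing gives $\pi(s) = \frac{1}{Z}\e^{-\mathcal{G}(s)} = \prod_{j=1}^L \frac{\e^{-\mathcal{G}^j(s^j)}}{Z^j}$, so defining $\pi^j(s^j) = \frac{1}{Z^j}\e^{-\mathcal{G}^j(s^j)}$ yields $\pi(s) = \prod_{j=1}^L \pi^j(s^j)$. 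I would close the argument by noting that $\pi^j$ really is the stationary distribution on $\Omega_{s_0}^j$ in the relevant sense: it is precisely the product-Poisson form of~(\ref{ProductPoissonThrm}) restricted to that subspace, and summing $\pi(s)$ over all coordinates outside block $j$ recovers $\pi^j(s^j)$ because every other factor sums to $1$.

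I do not expect a genuine obstacle here: the statement is essentially immediate once one recognizes that the Gibbs measure of a detailed balanced CRN is log-linear in the counts, hence a product measure on the lattice, and that restricting a product measure to a product set leaves it a product measure. The one point deserving care is the bookkeeping around the hypothesis --- that ``product of subspaces'' must mean a factorization over a partition of the species coordinates, not an arbitrary set-theoretic product decomposition of $\Omega_{s_0}$ --- together with the check that the per-block normalizations multiply to $Z$. That identification, rather than any hard estimate, is the step I would state most carefully.
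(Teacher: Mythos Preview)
Your proposal is correct and follows essentially the same route as the paper's own proof: both arguments use the additive (log-linear) structure of $\mathcal{G}(s)$ to split the Gibbs factor over the species blocks and then exploit the Cartesian-product form of $\Omega_{s_0}$ to factor the partition function into $\prod_j Z^j$. Your explicit remark that ``product of subspaces'' must be read as a partition of the species coordinates is exactly the interpretive step the paper also makes before carrying out the identical computation.
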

\begin{proof} 
%Omitted to appendix (\ref{appendix:proof_poisson_theorem2}).
If $\Omega_{s_0}$ is decomposable into a product of subspaces $\Omega^j_{s_0}$, with $j = 1...L$, then each subspace involves disjoint sets of species $Y^j = \{S^j_1, \dots, S^j_{M_j}\}$. In this case the steady-state distribution of a detailed balanced CRN can be factorized due to the simple nature of $\mathcal{G}(s)$ given by eq. (\ref{eq:dbcrn3}):
\begin{equation}
\pi(s) = \frac{\e^{-\mathcal{G}(s)}}{Z} = \frac{\prod_{j=1}^L\e^{-\mathcal{G}(s^j)}}{\prod_{j=1}^L\sum_{{s^j}^\prime \in \Omega^j_{s_0}}  \e^{-\mathcal{G}({s^j}^\prime)}} = \prod_{j=1}^L \frac{\e^{-\mathcal{G}(s^j)}}{ \sum_{{s^j}^\prime \in \Omega^j_{s_0}}  \e^{-\mathcal{G}({s^j}^\prime)}}
= \prod_{j=1}^L \frac{\e^{-\mathcal{G}(s^j)}}{Z^j} \ ,
\end{equation}
where $s^j = (s_1^j, s_2^j,\dots, s^j_{M_j})$ is  state of the set of species within the subspace $j$.$\QED$
\end{proof}

The product form $\pi(s)$ means that species from separate subspaces $\Omega^j_{s_0}$ are statistically independent. To develop non-trivial conditional probabilities for the states of different species, therefore, it is necessary either to use a non-detailed balanced CRN by driving the system out of equilibrium, or to generate complex interdependencies through conservation laws that constrain reachability classes and ``entangle'' the state spaces for different species. We explore both of these possibilities in the following sections.

%\comEW{The text below appears to be a re-write of the above.  No need for both.  I incorporated the most notable comment from below into the above, with the expectation that the text below will be removed, but I am open to other solutions.}
%\WP{Together, theorems \ref{theorem:detailed_balanced_clamping} and \ref{theorem:poisson2} show that any detailed balanced CRN can compute inference. However, conservation laws are necessary to ``entangle'' the state-space of the CRN to make sure that the distribution of counts of different species are not independent of each other. This provides a recipe for producing detailed balanced CRN's capable of complex inferential calculations.}

\label{sec:constructions1}
\label{sec:constructions}
\vspace{-.2cm}
\subsection{Direct Implementation of a Chemical Boltzmann Machine (DCBM):}
\vspace{-.1cm}

We first consider the most direct way to implement an $N$-node Boltzmann machine with a chemical system. Recall that a BM has a state space $\Omega_{BM} = \{0, 1\}^N$ and an energy function
$E(x_1,x_2,\dots,x_N)=-\sum_{i<j}w_{ij}x_ix_j - \sum_i \theta_i x_i$. 
We use a dual rail representation of each node $i$ by two CRN species $X_i^{ON}$ and $X_i^{OFF}$ and a conservation law, $x_i^{ON}+x_i^{OFF}=1$. The species $X_i^{ON}$ and $X_i^{OFF}$ could represent activation states of an enzyme.  The CRN has $M=2N$ species and states
% This makes no sense: $s_i=(x_i^{ON}, x_i^{OFF})$
% s_i is the count of the i^th species in the CRN.
$s=(x_1^{ON}, x_1^{OFF}, \dots, x_N^{ON}, x_N^{OFF})$. 
% This also makes no sense:
%     By requiring binary counts, the CRN has a state space $\Omega_{DCBM} = \{0, 1\}^{2N}$. 
% First, if you don't restrict to the reachable states, then the fundamental state space is \N^M.
% Second, if you do restrict to the reachable states, then the conservation laws give us only 2^N reachable states.
% Third, if F is to be a one-to-one invertible mapping, then \Omega_{DCBM} must only include states that satisfy the conservation laws.
Although there are $2^{2N}$ states in which each species has a count of at most one, only $1/2^N$ of these states are reachable due to the conservation laws.  Let $\Omega_{DCBM}$ be the states reachable from a valid initial state. There exists a one-to-one invertible mapping $\mathcal{F}: \Omega_{BM} \to \Omega_{DCBM}$ which maps the states $x \in \Omega_{BM}$ of a BM to states $s = \mathcal{F}(x) \in \Omega_{DCBM}$ of the CBM, according to $x_i^{ON} = x_i$ and $x_i^{OFF} = 1-x_i$.

%\begin{wrapfigure}{R}{=0.5\textwidth}
\begin{figure}[t!!!]
\begin{center}
\includegraphics[width=0.85\textwidth]{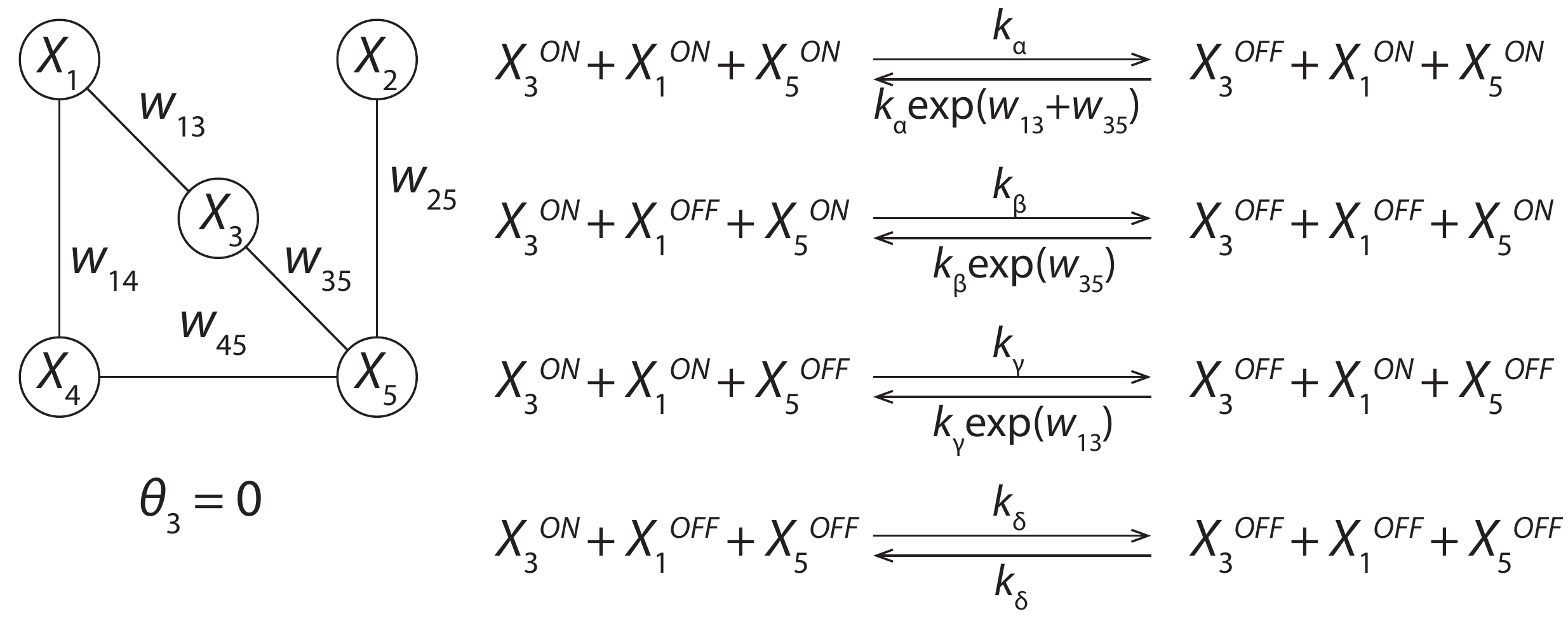}
\caption{The reactions required by the dynamics of a single node using the direct CBM implementation. We consider a simple network with the illustrated topology, and display the required reactions for node 3. Since node 3 has degree 2, there are 4 possible states of its neighbors, and hence four distinct pairs of reactions for the species of node 3. The relative rates of each pair of reactions is set by $w_{ij}$ as indicated (where, for simplicity, we have assumed $\theta_3=0$).
\label{fig:construction1}
}
\end{center}
\end{figure}
%\end{wrapfigure}

Reactions are intended to provide a continuous-time analog of the typical BM implementations,  such as the Gibbs sampling method discussed in Section~\ref{sec:BM}. In each reaction $r$, only the species $X_i^{ON}$ and $X_i^{OFF}$, corresponding to a single node $i$, change ($\nu_r - \mu_r$ has one non-zero component). 
To reproduce the stationary distribution of a Boltzmann machine with energy function $E(x)$, it is sufficient to require that the CTMC for the CRN satisfies
\begin{equation}
\label{direct_cbm_eq1}
s \xrightleftharpoons[]{} s' 
\quad \textrm{with} \quad 
\frac{R_{s \to s'}}{R_{s' \to s}} 
= \frac{\e^{-E(s')}}{\e^{-E(s)}}
= \e^{\theta_i + \sum_{j \in \N{i}} w_{ij} x_j^{ON}}
\end{equation}
where $s$ is any state with $x_i^{OFF}=1$, and $s'$ has $x_i^{ON}=1$ but is otherwise the same.
Such a choice would enforce detailed balance of the CTMC, with the desired steady-state distribution 
\begin{equation}
\pi(s) = \frac{1}{Z}\e^{-E(s)} = \frac{1}{Z} \e^{ -\sum_{i<j}w_{ij}x_i^{ON} x_j^{ON} - \sum_i \theta_i x_i^{ON}} \ . 
\end{equation}

To implement such a CRN, we define a reaction set $\mathcal{R}$ that contains a distinct pair of reactions for each possible state of the neighbors of $i$ for which $w_{ij}\neq 0$. Let $\alpha^i \in \{ ON, OFF \}^{ | \N{i} | }$ denote a state of neighboring species. Then, the necessary reactions and rate constants are
\begin{equation}
X_i^{ON} + \sum_{j \in \N{i}} 
X_j^{\alpha^i_j} \xrightleftharpoons[k_{i^+|\alpha_i}]{k_{i^-|\alpha_i}} X_i^{OFF} +  \sum_{j \in \N{i}}
X_j^{\alpha^i_j}, \quad  \frac{k_{i^+\mid \alpha^i}}{k_{i^-\mid \alpha^i}} = \e^{\theta_i + \sum_{j \in \N{i}} w_{ij} x_j^{ON}},
\label{direct_cbm_rxn}
\end{equation}
for each $i$ and every possible state $\alpha$. 
In physical terms, the species representing the neighbors of node $i$ collectively catalyze $X_i^{OFF} \leftrightharpoons X_i^{ON}$, with a separate pair of reactions for every possible $\alpha^i$.  
While this entails a large number of reactions ($2^{ | \N{i} | + 1}$ for each node $i$), it allows the rate constants for each configuration of neighbors to be distinct, and thus to satisfy the ratio of rate constants given in (\ref{direct_cbm_rxn}). For CRN states that satisfy the conservation laws $x_i^{ON}+x_i^{OFF}=1$, there will be a unique reaction that can flip any given bit, and thus the CTMC detailed balance (\ref{direct_cbm_eq1}) also holds, yielding the correct $\pi(s)$.
The construction is illustrated by example in Figure~\ref{fig:construction1} and compared to other constructions in Figure~\ref{Fig_ExactConstructions}.

The distribution $\pi(s)$ is identical to that of the BM, both with and without clamping. Reachability is preserved by clamping, as all states satisfying the conservation laws and clamping can be reached in the clamped CRN.  All results derived for traditional BMs therefore apply, including conditional inference and the Hebbian learning rule. The construction can be generalized to any graphical model and indeed to any finite Markov chain defined on a positive integer lattice. 

With the DCBM, we have shown that CRNs can express the same distributions as BMs, and are thus very expressive. However, since each possible state $\alpha^i$ of $\N{i}$ is associated with two reactions, the number of reactions of the CRN is exponentially large in the typical node degree $d$ in the original BM. Moreover, the scheme requires high molecularity reactions in which multiple catalysts effect a single transition (the molecularity grows linearly with $d$). Physical implementations are therefore likely to be challenging. We further note that as a consequence of Theorem~\ref{theorem:poisson2}, the DCBM cannot be detailed balanced at the level of the underlying chemistry, due to its simple conservation laws. Physically, this means that the DCBM must use a fuel species to drive each reaction. Details of this argument are given in the appendix (\ref{appendix:dcbm_fuel}).

\vspace{-.2cm}
\subsection{The Edge Species CBM Construction (ECBM):}\label{ss:edgespeciescbm}
\vspace{-.1cm}

Can a detailed balanced CRN also implement a Boltzmann machine, or is it necessary to break detailed balance at the level of the CRN reactions, as in the DCBM? Here we show that it is not necessary by introducing a detailed balanced CRN that uses species to represent both the nodes and edges of a BM. The $N$ nodes of a BM are converted into $N$ pairs of species, $X_i^{ON} \textrm{ and } X_i^{OFF}$, via a dual rail implementation identical to that used in the DCBM. Similarly, the edges $w_{ij}$ are represented by dual rail edge species $W_{ij}^{ON}$ and $W_{ij}^{OFF}$ with the conservation law $w_{ij}^{ON}+w_{ij}^{OFF}=1$ for $1 \leq j < i \leq N$. Note that we may slightly abuse notation and let $W_{ij}^{\alpha_{ij}}$ and $W_{ji}^{\alpha_{ij}}$, with $\alpha_{ij} \in \{ON, OFF \}$, represent the same chemical species.

To have detailed balance, we associate energies to each node species determined by the bias in a BM, $G[X^{ON}_i]= -\theta_i$ and $G[X^{OFF}_i] = 0$. Similarly, each edge species has an energy determined by the corresponding edge weight $G[W^{ON}_{ij}] = -w_{ij}$ and $G[W^{OFF}_{ij}] = 0$.
Finally, we define a set of catalytic reactions that ensure that the states of edge and node species are consistent, meaning $w_{ij}^{ON}=1$ if and only if $x_i^{ON} = 1 \textrm{ and } x_j^{ON}=1$. To achieve this coupling, the reactions that switch node $i$ are always catalyzed by the species corresponding to the set of neighboring nodes $\N{i}$. Simultaneously, these reactions switch edge ${ij}$ if $j \in \N{i}$ and $x_j^{ON}=1$, maintaining  $x_i^{ON}x_j^{ON} = w_{ij}^{ON}$.
The set of reactions that result from this scheme are
\begin{equation}
\label{EdgeSpeciesBCRN}
X_i^{OFF} + \sum_{j \in \N{i}} \! X_j^{\alpha_j^i} + \!\!\! \sum_{j \in \N{i},\,x_j^{ON}=1} \!\! W_{ij}^{OFF} \rightleftharpoons
X_i^{ON} + \sum_{j \in \N{i}} \! X_j^{\alpha_j^i} + \!\!\! \sum_{j \in \N{i},\,x_j^{ON}=1} \!\! W_{ij}^{ON}.
\end{equation}
This reaction scheme is visualized in Figure~\ref{Fig_ExactConstructions}. Just like in the DCBM, there is a separate pair of reactions for each node $i$ for each state of its neighbors $\alpha^i$. In this case, however, the backwards reaction in (\ref{EdgeSpeciesBCRN}) does represent a transition that is a true chemical inversion of the forwards reaction.  Further,
given a valid initial state, clamping any subset of the $X_i^{ON/OFF}$ species preserves reachability.

\begin{figure}[tb!!!]
\centering
\includegraphics[width=0.75\textwidth]{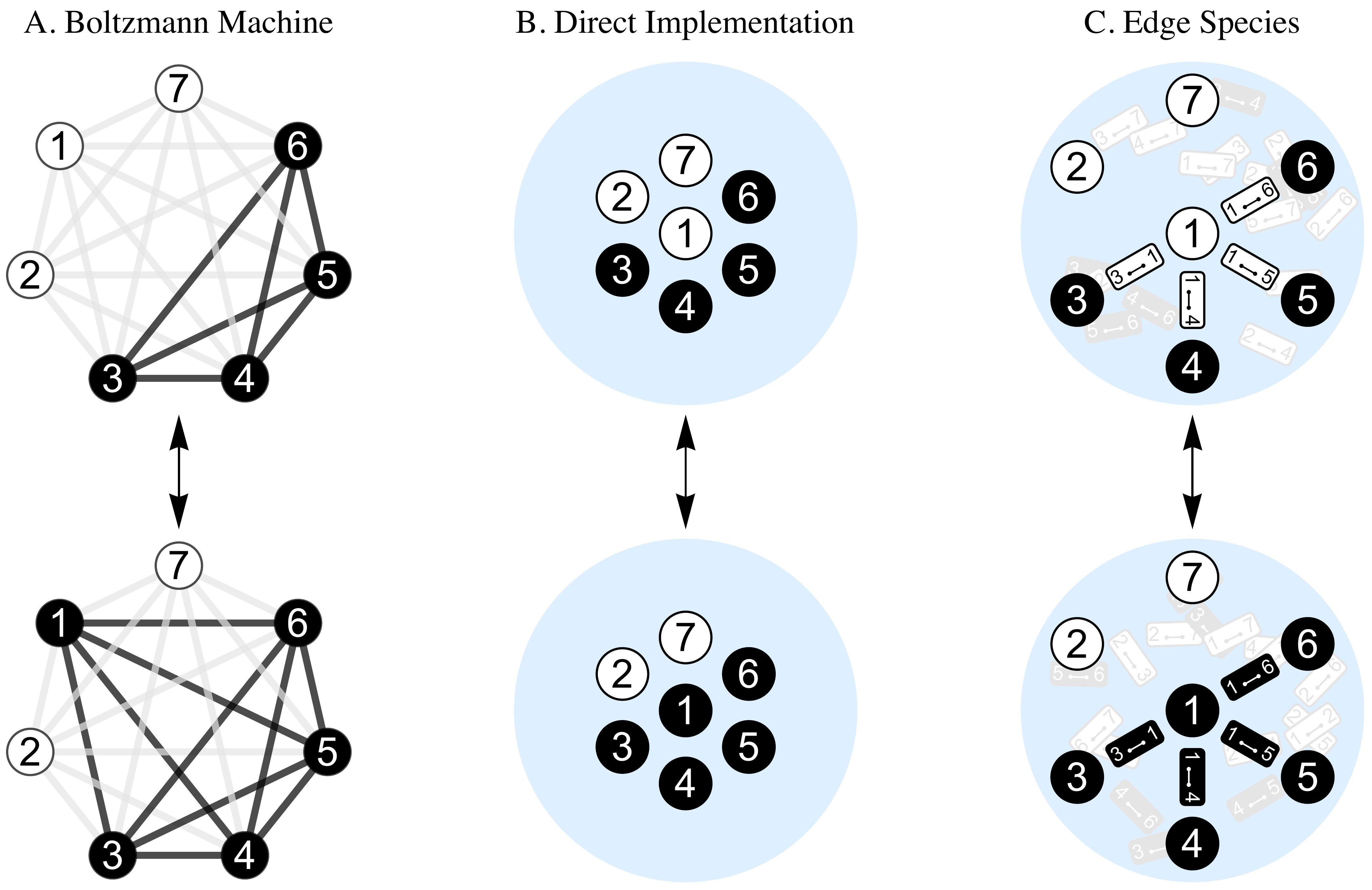}
\caption{
\label{Fig_ExactConstructions} 
Comparison of the switching of a node in exact constructions for fully-connected topologies. Black circles indicate ON species (or nodes), and white circles indicate OFF species. Similarly, black/white rectangles indicate ON/OFF edge species. Species not involved in the reaction have been grayed out. \textbf{A.} A Boltzmann machine. Black edges contribute to the energy function. \textbf{B.} The direct implementation of a chemical Boltzmann machine. All species jointly catalyze the conversion of $X_1^{\rm OFF}$ to $X_1^{\rm ON}$. \textbf{C.} The edge species chemical Boltzmann machine. $X_1^{\rm OFF}$ is converted to ON simultaneously with $W_{14}^{OFF},\,W_{15}^{OFF},\,W_{16}^{OFF}$ and $W_{17}^{OFF}$; all other species involved act as catalysts. %\textbf{D.} The Edge Species Chemical Boltzmann Machine Bisimulation is equivalent to the Edge Species CBM but node species and edge species must first form a complete complex (indicated by the physical proximity of the species) before a unimolecular reaction occurs.%
}
\end{figure}

\begin{theorem}
\label{thrm:ecbm_equivalence}
The stationary distribution $\pi(x^{ON}, x^{OFF}, w^{ON}, w^{OFF})$ of the\linebreak ECBM is equivalent to the stationary distribution of a Boltzmann machine, $\Prob(x)$, provided that the ECBM begins in a valid state obeying $w_{ij}^{ON} = x_i^{ON} x_j^{ON}$ and one applies a one-to-one invertible mapping $\mathcal{F}$ between BM and ECBM states, as described below. 
\end{theorem}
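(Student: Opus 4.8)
The plan is to follow the same route as Theorem~\ref{theorem:detailed_balanced_clamping}: recognize the ECBM as a detailed balanced CRN, write down its product-Poisson stationary distribution via~(\ref{ProductPoissonThrm}), pin down its reachability class, and then pull the distribution back through the map $\mathcal{F}$. First I would check that the reaction scheme~(\ref{EdgeSpeciesBCRN}) is consistent with the stated energies, i.e.\ that~(\ref{eq:dbcrn2}) holds. In the forward direction of~(\ref{EdgeSpeciesBCRN}) the only species whose counts change are $X_i^{OFF}$ (down by one), $X_i^{ON}$ (up by one), and, for each $j \in \N{i}$ with $x_j^{ON}=1$, the pair $W_{ij}^{OFF},W_{ij}^{ON}$, while all the $X_j^{\alpha_j^i}$ act as catalysts. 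Hence $\Delta G_r = G[X_i^{ON}] - G[X_i^{OFF}] + \sum_{j\in\N{i},\,x_j^{ON}=1}\bigl(G[W_{ij}^{ON}]-G[W_{ij}^{OFF}]\bigr) = -\theta_i - \sum_{j\in\N{i}} w_{ij}x_j^{ON}$, so any rate constants with $k_{r^+}/k_{r^-} = \e^{-\Delta G_r} = \e^{\theta_i + \sum_j w_{ij}x_j^{ON}}$ make the CRN detailed balanced; CTMC-level detailed balance of the form~(\ref{direct_cbm_eq1}) then follows automatically from~(\ref{eq:dbcrn3}), as in the detailed-balanced-CRN subsection. Equation~(\ref{ProductPoissonThrm}) gives $\pi(s) = \e^{-\mathcal{G}(s)}/Z$ on the reachability class $\Omega_{ECBM}$ of the chosen valid initial state.

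The central step, and the one I expect to be the main obstacle, is to identify $\Omega_{ECBM}$ precisely with the set of \emph{valid} states: those obeying $x_i^{ON}+x_i^{OFF}=1$, $w_{ij}^{ON}+w_{ij}^{OFF}=1$, and the consistency relation $w_{ij}^{ON}=x_i^{ON}x_j^{ON}$. The two conservation laws are preserved by every reaction by inspection. For the consistency relation I would show it is a reaction invariant by a short case analysis: when node $i$ flips, for each neighbor $j$ with $x_j^{ON}=1$ the edge species $W_{ij}$ flips in lockstep, so $w_{ij}^{ON}=x_i^{ON}$ remains equal to $x_i^{ON}x_j^{ON}$, while for each neighbor $j$ with $x_j^{ON}=0$ the edge is untouched and both sides equal $0$ before and after. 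Since the initial state is valid by hypothesis, $\Omega_{ECBM}$ contains only valid states. Conversely, any valid state is reachable from the initial one by flipping, one at a time, the nodes on which they disagree: each such flip is a legal firing of~(\ref{EdgeSpeciesBCRN}) for the appropriate $\alpha^i$, and every intermediate state is valid. Hence $\Omega_{ECBM}$ equals exactly the set of valid states; this is also precisely what yields ``clamping preserves reachability'' for the $X$ species, so that Theorem~\ref{theorem:detailed_balanced_clamping} applies to the clamped ECBM.

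Finally I would define $\mathcal{F}\colon\{0,1\}^N \to \Omega_{ECBM}$ by $x_i^{ON}=x_i$, $x_i^{OFF}=1-x_i$, $w_{ij}^{ON}=x_ix_j$, $w_{ij}^{OFF}=1-x_ix_j$; by the previous paragraph this is a bijection onto $\Omega_{ECBM}$. Because every count in $\mathcal{F}(x)$ is $0$ or $1$, all the $\log(s_i!)$ terms in $\mathcal{G}$ vanish, and substituting the energies gives $\mathcal{G}(\mathcal{F}(x)) = -\sum_i \theta_i x_i - \sum_{i<j} w_{ij} x_i x_j = E(x)$. Therefore $\pi(\mathcal{F}(x)) = \e^{-E(x)}/Z$ with $Z = \sum_{s\in\Omega_{ECBM}} \e^{-\mathcal{G}(s)} = \sum_{x\in\{0,1\}^N}\e^{-E(x)}$, which is exactly the Boltzmann-machine partition function of~(\ref{BM Dist}); hence $\pi(\mathcal{F}(x)) = \Prob(x)$, establishing the claim. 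The only remaining work is routine bookkeeping in the case analysis of the invariant and in the energy substitution.
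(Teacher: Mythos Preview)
Your proposal is correct and follows essentially the same route as the paper: detailed balance gives the product-Poisson form~(\ref{ProductPoissonThrm}), the conservation laws and consistency relation pin down $\Omega_{ECBM}$ as exactly the valid states, and the bijection $\mathcal{F}$ identifies $\mathcal{G}(\mathcal{F}(x))$ with $E(x)$ so that both the Gibbs factors and the partition functions coincide. If anything, you are more careful than the paper in two places---you explicitly verify the CRN-level detailed balance condition~(\ref{eq:dbcrn2}) for the reaction scheme, and you argue both directions of $\Omega_{ECBM}=\{\text{valid states}\}$ (the paper only sketches the forward direction).
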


\begin{proof}
%Omitted to appendix (\ref{proof:ecbm_equivalence}).
If this CRN begins in a consistent state, then every subsequent reaction will conserve this condition. The combined conservation laws $x_i^{ON}+x_i^{OFF}=1$, $w^{ON}_{ij}+w^{OFF}_{ij} =1$, and $w^{ON}_{ij}=x^{ON}_i x^{ON}_j$ ensure that the set of values $x_i^{ON}$ uniquely determine the CRN state for the ECBM, and thus
--- similar to how the BM and DCBM states were related ---
we can define a one-to-one invertible mapping $\mathcal{F}$ that sets
$x_i^{ON}=x_i$ and obeys the conservation laws.

The ECBM is detailed balanced and therefore its stationary distribution has the form (\ref{ProductPoissonThrm}). Substituting the conservation law $w_{ij} = x_i x_j$ and omitting species with 0 energy results in 
\begin{equation}
\pi(x^{ON}, x^{OFF}, w^{ON}, w^{OFF}) = \frac{1}{Z_\pi} \e^{-\sum_{i \neq j}G[W_{ij}^{ON}]x_i^{ON}x_j^{ON} -\sum_i G[X_i^{ON}]x_i^{ON}}
\end{equation} 
Comparing this expression to the distribution of a BM, equation (\ref{BM Dist}), the above expressions are equivalent provided that their partition functions are equivalent. To see this is the case, notice that: 1) the partition function is just a sum over the Gibbs factors across the entire state space. 2) The Gibbs factors take the same form between the ECBM and BM (as shown above). And 3) the reachable state spaces spaces are equivalent. Thus a sum over all possible Gibbs factors will be equal. Therefore, $Z_{BM} = Z_\pi$ and the theorem is proven. $\QED$
\end{proof}
%\textcolor{Brown}{manoj: to prove the conservation law holds, merely observe that if the condition held initially, then every reaction that can occur will preserve the condition. Say this at the point before the theorem where you observed the conservation law.}

Via the ECBM, we have shown that even detailed balanced CRNs can represent rich distributions and are able to calculate conditional distributions through clamping as proven in Theorem~\ref{theorem:detailed_balanced_clamping}. Due to being detailed balanced, this construction requires no fuel molecules and performs sampling via the intrinsic equilibrium fluctuations of the CRN. Moreover, it is only necessary to tune molecular energies in this construction, since appropriate relative rate constants follow by definition.
This construction is possible due to the complex set of conservation laws that ensure that the reachability classes of all the $X_i^{ON/OFF}$ species are tightly coupled via the $W_{ij}^{ON/OFF}$ species. One implication is that this construction does not generalize easily to non-binary species counts. Additionally, issues related to high molecularity reactions and large number of reactions remain. %In the appendix (\ref{appendix:bcbm}) we use techniques of bisimulation \cite{Johnson2016} to convert the ECBM into a set of bimolecular and unimolecular interactions, overcoming one of these difficulties.

\vspace{-.2cm}
\section{Approximate Bimolecular Implementations}
\vspace{-.2cm}

The DCBM and the ECBM both require reactions of high molecularity. High molecularity reactions and systems involving many species are physically challenging to implement and also potentially suffer from long mixing times. In this section, we discuss an approximation scheme to create CBMs with lower molecularity reactions and thus overcome these issues.

\vspace{-.2cm}
\subsection{Taylor Series Chemical Boltzmann Machine (TCBM):}
\vspace{-.1cm}

Here, we demonstrate a compact CBM that approximates a BM. It is not detailed balanced on either the Markov chain or the CRN level, but uses only $2N$ species and unimolecular and bimolecular reactions. The TCBM is a non-equilibrium CBM of the kind discussed in Section~\ref{sec:constructions1} that uses a dual-rail representation and single-node transitions to approximately implement a BM. The reactions are given by:
\begin{align}
\label{TCBM Reactions}
X_i^{OFF} & \xrightleftharpoons[k]{k} X_i^{ON} \nonumber \\
X_j^{ON}+ X_i^{OFF} & \xrightarrow{ka_{ij}} X_j^{ON} + X_i^{ON} \nonumber \\
 X_j^{ON}+X_i^{ON}  & \xrightarrow{kb_{ij}} X_j^{ON} + X_i^{OFF} 
\end{align}
which, with appropriate initial conditions, preserve the conservation law that $x_i^{ON}+x_i^{OFF}=1$.

This model's parameters can be taken directly from the weights of a BM, $w_{ij}$. First, define a symmetric matrix $W$. Decompose this matrix into the difference of two positive matrices, $W = A - B$, where $a_{ij} \in A$ are all $w_{ij}>0$ and $b_{ij} \in B$ are the absolute values of all $w_{ij} < 0$. Finally, $k$ is an arbitrary overall rate.  This construction can be understood as an approximation of equation (\ref{direct_cbm_eq1}), which dictates that for two states
$s$ and $s'$ that differ only in bit $i$ with $x_i^{ON}=1$ in state $s'$, the CTMC transition rates must satisfy
\begin{equation}
\label{TaylorExpansion}
%\frac{\rho _{i^-}(\alpha)}{\rho_{i^+}(\beta)}=  
\frac{R_{s \to s'}}{R_{s' \to s}} = 
\frac{\e^{\sum_{j \neq i} a_{ij} x_j^{ON}}}{\e^{\sum_{j \neq i} b_{ij} x_j^{ON}}} = \frac{1+\sum_{j \neq i} a_{ij} x_j^{ON} + \mathcal{O}((\sum_{j \neq i} a_{ij} x_j^{ON}))^2}{1+\sum_{j \neq i} b_{ij} x_j^{ON} + \mathcal{O}((\sum_{j \neq i} b_{ij} x_j^{ON}))^2} \ ,
\end{equation}
The bias $\theta_i$ has been absorbed into $w_{ij}$ for notational clarity by assuming there is some $x_0^{ON} = 1$ whose weights act as biases. The TCBM is a bimolecular CRN obeying the same conservation laws as the DCBM in which each species $j$ acts as an independent catalyst for transitions in $i$ with reaction rates determined by $a_{ij}$ and $b_{ij}$. The relative propensities of this network are exactly equal to the linear expansion of the relative propensities shown in the last term in (\ref{TaylorExpansion}). Specifically, the numerator is the sum of the reaction propensities for $X_i^{OFF} \rightarrow X_i^{ON}$ and the denominator is the sum of the reaction propensities for $X_i^{ON} \rightarrow X_i^{OFF}$, in each case plus a constant term due to the unimolecular reactions. We thus propose the simple  scheme in (\ref{TCBM Reactions}) as an approximate CBM; Figure~\ref{Fig_CBMInference}A depicts this TCBM schematically. This model bears some resemblance to protein phosphorylation networks where adding or removing a phosphate group is analogous to turning a species on or off; both are driven, catalytic processes capable of diverse computation.

\vspace{-.2cm}
\subsection{Approximate BCRN Inference:}
\vspace{-.1cm}

Remarkably, this simple approximate CBM can reasonably approximate the inferential capabilities of a BM. We demonstrate this by using (\ref{TCBM Reactions}) to convert a BM trained on the MNIST dataset \cite{LeCun1998} to a TCBM (Figure~\ref{Fig_CBMInference}). 
We then compare the BM and the TCBM side by side. Digit classification is shown in Figure~\ref{Fig_CBMInference} panels C and D for a BM and in Figure~\ref{Fig_CBMInference} panels F and G for a CBM as confusion heatmaps. Classification is carried out by clamping the image nodes to MNIST images and averaging the values of the classification nodes. As  is apparent from these plots, the BM does a fairly reasonable job classifying these digits, but struggles on the number 5. The CBM functions as a very noisy version of the BM with nodes in general much more likely to be on. The CBM has also faithfully inherited the capabilities and limitations of the BM and similarly struggles to classify the digit 5. Digit generation is shown in Figure~\ref{Fig_CBMInference}E for a BM and \ref{Fig_CBMInference}H for a CBM. Generation was carried out by clamping a single class node to 1 and all other class nodes to 0, then averaging the output of the image nodes after the network had equilibrated. For each generated image, we show the raw output and the top 85th percentile of nodes, a thresholding which helps visualize the noisy output. As is apparent from the raw output, the CBM approximation scheme does not generate images nearly as distinctly as the BM. However, this approximation does faithfully reproduce plausible digits when filtered for the top 85th percentile. Additional training and simulation details can be found in the appendix (\ref{appendix:sim_details})

The overall performance of the CRN is reasonable, given the fact that weights were simply imported from a BM without re-optimization. The TCBM only approximates the distribution implied by these weights and, in the absence of detailed balance, does not have an established formal relationship between clamping and conditioning.

\begin{figure}[tb!!!]

\includegraphics[width=1.0\textwidth]{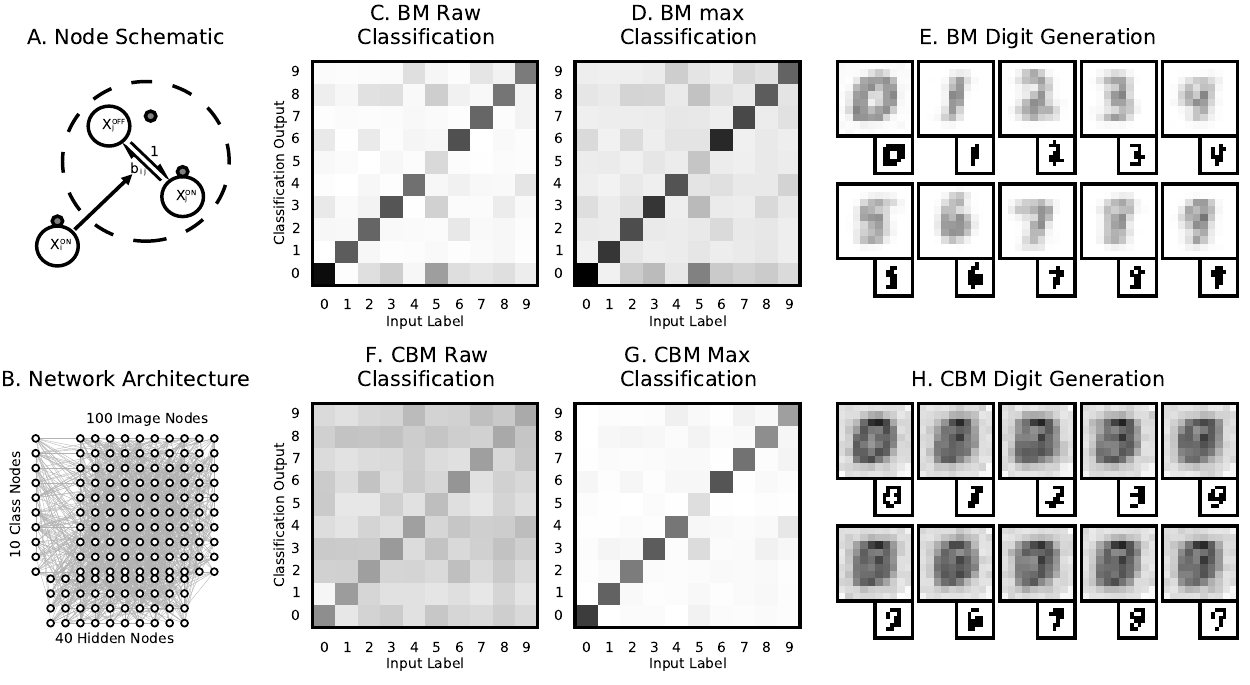}
\caption{
\textbf{A.} CRN underlying an individual node of the TCBM approximation. In this case a negative weight, $w_{ij}<0$ is shown because $X_i^{ON}$ catalyzes $X_j^{ON} \to X_j^{OFF}$.
\textbf{B.} Network architecture used for simulations is fully connected but only 10 percent of edges are shown for clarity. 
\textbf{C.} Average raw classification output of a BM running with clamped MNIST digits.
\textbf{D.} Average max classification output of a BM running with clamped MNIST digits. 
\textbf{E.} Digits generated by a BM by clamping individual class nodes. Small sub-boxes in the bottom corners are plots of the top 85th percentile of pixels. 
%Clamped simulations are averaged over 50000 Gibbs sampling steps.
\textbf{F.} Average raw classification output of a TCBM running with clamped MNIST digits. 
\textbf{G.} Average max classification output of a TCBM running with clamped MNIST digits. 
\textbf{H.} Digits generated by a TCBM by clamping individual class nodes.  Small sub-boxes in the bottom corners are plots of the top 85th percentile of pixels. 
%Clamped simulations are averaged over 50000 SSA steps.
\label{Fig_CBMInference}
}
\end{figure}

\vspace{-.2cm}
\section{Detailed Balanced CRN Learning Rule}
\vspace{-.2cm}

A broad class of detailed balanced chemical reaction networks can be trained with a Hebbian learning rule between a waking phase (clamped) and sleeping phase (free) that is reminiscent of the classic gradient descent learning algorithm for a BM~\cite{Hinton1984,Ackley1985}. A similar algorithm for detailed balanced CRNs follows.

First we state a simple case of Theorem~\ref{thrm:detailed_balance_learning_rule_with_hidden_units} where we just want a CRN with stationary distribution $\pi$ over $\Omega_{s_0}$ to learn a target distribution $Q$ also defined on $\Omega_{s_0}$. Then, the learning rule is given by
\begin{equation}
\label{visible_only_learning_rule}
\frac{d g_i}{dt} = -\frac{\partial D_{KL}}{\partial g_i} = \langle s_i \rangle_{Q} -\langle s_i \rangle_{\pi}. 
\end{equation}
Here, $\langle s_i \rangle_\pi$ and $\langle s_i \rangle_Q$ denote the expected count of the species $S_i$ with respect to the probability distributions $\pi$ and $Q$, respectively, and $g_i = G[S_i]$ is the energy of species $S_i$. Theorem~\ref{thrm:detailed_balance_learning_rule_with_hidden_units} generalizes this procedure to cases where there are hidden species.
%First, recall that

%\begin{theorem}
%\label{thrm:detailed_balance_learning_rule}
%Let $\mathcal{C} = (\mathcal{S}, \mathcal{R})$ be a detailed balanced chemical reaction network with stationary distribution $\pi(s)$ on $\Omega_{s_0}$. Let $Q(s)>0 \: \forall s \in \Omega_{s_0}$ be a target distribution. The gradient of the Kullback-Leibler divergence from $\pi$ to $Q$ with respect to the energy, $g_i = G[S_i]$, of the species $S_i$ is given by 
%\begin{equation}
%\label{Learning Rule thrm eq}
%\frac{\partial D_{KL}}{\partial g_i} = \langle s_i \rangle_{\pi} -\langle s_i \rangle_{Q}. 
%\end{equation}
%Here, $\langle s_i \rangle_\pi$ and $\langle s_i \rangle_Q$ denote the expected count of the species $S_i$ with respect to the probability distribution $\pi$ and $Q$, respectively.
%\end{theorem}
%\begin{proof}
%Omitted to appendix (\ref{proof:detailed_balance_learning_rule}) due to space constraints.
%\end{proof}

%This learning rule can also be extended to the case when there are hidden units. A BM captures higher-order correlations in data only when it is allowed to have hidden and visible units. We need to allow similar generality with our CRN learning rule if we are to express interesting probability distributions. In this section, we derive a learning rule for CRNs when some of the species are hidden.

\begin{theorem}
\label{thrm:detailed_balance_learning_rule_with_hidden_units}
Let $\mathcal{C} = (\mathcal{S}, \mathcal{R}, k)$ be a detailed balanced chemical reaction network with stationary distribution $\pi(s)$ on $\Omega_{s_0}$. Consider a partition $(V,H)$ of the set $\mathcal{S}$ of species into visible and hidden species such that $\pi(s) = \pi(v, h)$. Require that for all visible states $v$, the clamped CRN $\mathcal{C}|_{V=v}$ preserves reachability. Let $Q(v)>0 \textrm{ for all } v \in \Omega^V_{s_0} = \{v \textrm{ s.t. } (v, h) \in \Omega_{s_0}\}$ be a target distribution defined on the projection of $\Omega_{s_0}$ onto $V$. Furthermore, let $\pi_Q(v, h) =Q(v) \pi(h \mid v)$ be the weighted mixture of stationary distributions of the clamped CRNs $\mathcal{C} |_{V=v}$ with $v$ drawn from the distribution $Q$. Then, the gradient of the Kullback-Leibler divergence from $\pi_V$ to $Q$ with respect to the energy, $g_i = G[S_i]$, of the species $S_i$ is given by
\begin{equation}
\frac{\partial D_{KL}(Q || \pi_V)}{\partial g_i} =
\langle s_i \rangle_{\pi} -\langle s_i \rangle_{\pi_Q}
\label{eq_db_lr_thrm2}
\end{equation}
where $\pi_V(v) = \sum_{h \in \Omega_{s_0}^H} \pi(v, h)$ is the marginal $\pi(v, h)$ over hidden species $H$.
\end{theorem}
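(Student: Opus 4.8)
The plan is to reproduce, in the language of molecular free energies, the classical derivation of the Boltzmann-machine learning rule with hidden units: I replace the parametrization by weights with the parametrization by the $g_i=G[S_i]$, use the product-of-Poissons form (\ref{ProductPoissonThrm}) for $\pi$, and invoke Theorem~\ref{theorem:detailed_balanced_clamping} to identify the conditionals that appear with stationary distributions of clamped CRNs.

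First I would set things up so the derivative is meaningful. Varying the $g_i$ --- equivalently, rescaling rate constants while keeping them all positive and keeping (\ref{eq:dbcrn2}) satisfied --- leaves the reaction structure, and hence $\Omega_{s_0}$ together with its visible/hidden fibres, unchanged. By (\ref{ProductPoissonThrm}) the partition function $Z(g)=\sum_{s\in\Omega_{s_0}}\e^{-\mathcal{G}(s)}$ is finite and smooth in each $g_i$; since $\mathcal{G}(s)=\sum_j\big(s_j g_j+\log s_j!\big)$ we have $\partial\mathcal{G}(s)/\partial g_i=s_i$, hence $\partial\log Z/\partial g_i=-\langle s_i\rangle_{\pi}$. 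The same computation applied to each fibre sum $Z_v(g)=\sum_{h:(v,h)\in\Omega_{s_0}}\e^{-\mathcal{G}(v,h)}$ gives $\partial\log Z_v/\partial g_i=-\langle s_i\rangle_{\pi(\cdot\mid v)}$, where $\pi(\cdot\mid v)$ is the conditional of $\pi$ on the fibre over $v$. Differentiating under these (possibly infinite) sums is justified because, by (\ref{ProductPoissonThrm}), they are restrictions of convergent Poisson-type series with finite first moments, so the differentiated series converge locally uniformly in $g$.

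Next I would expand $D_{KL}(Q\|\pi_V)=\sum_{v}Q(v)\log Q(v)-\sum_{v}Q(v)\log\pi_V(v)$, with both sums over $v\in\Omega^V_{s_0}$, observe that only the second depends on $g_i$, and use $\pi_V(v)=Z_v(g)/Z(g)$ so that $\log\pi_V(v)=\log Z_v(g)-\log Z(g)$. Substituting the two partial derivatives into $-\sum_v Q(v)\,\partial\log\pi_V(v)/\partial g_i$ and using $\sum_v Q(v)=1$ collapses the $Z$-contribution to the single global expectation $\langle s_i\rangle_{\pi}$ and leaves the $Z_v$-contribution as $\sum_v Q(v)\langle s_i\rangle_{\pi(\cdot\mid v)}$. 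This is where Theorem~\ref{theorem:detailed_balanced_clamping} is needed: since clamping $V$ to $v$ preserves reachability for every visible $v$, the fibre conditional $\pi(\cdot\mid v)$ equals the stationary distribution of the clamped CRN $\mathcal{C}|_{V=v}$, so $\pi_Q(v,h)=Q(v)\pi(h\mid v)$ is exactly the wake-phase mixture of the statement, and $\sum_v Q(v)\langle s_i\rangle_{\pi(\cdot\mid v)}=\sum_{v,h}Q(v)\pi(h\mid v)\,s_i=\langle s_i\rangle_{\pi_Q}$ for every species $S_i$. Assembling the two terms --- and using that $Q$ is supported exactly on $\Omega^V_{s_0}$ so $\pi(h\mid v)$ is defined wherever it is needed --- expresses $\partial D_{KL}(Q\|\pi_V)/\partial g_i$ as the difference of the global expectation $\langle s_i\rangle_{\pi}$ and the wake-phase expectation $\langle s_i\rangle_{\pi_Q}$, i.e.\ (\ref{eq_db_lr_thrm2}); taking $H=\emptyset$ recovers the visible-only rule (\ref{visible_only_learning_rule}).

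The step I expect to be the real obstacle is not this calculus but the content hidden in ``clamping preserves reachability''. A priori the naive fibre conditional of $\pi$ need not be the stationary distribution of any clamped dynamics: if some $(v,h)\in\Omega_{s_0}$ can be reached from $s_0$ only via trajectories that transiently change the clamped species, then $\mathcal{C}|_{V=v}$ cannot reach it, the two distributions disagree, and $\pi_Q$ is not the distribution one actually samples in the wake phase. Establishing that the hypothesis rules this out for every $v$ --- so that Theorem~\ref{theorem:detailed_balanced_clamping} applies uniformly over the support of $Q$ --- is precisely what turns the formal gradient identity into the operational wake/sleep learning rule; the remaining ingredients (interchanging differentiation with the sums, the bookkeeping with $\sum_v Q(v)=1$, and the reduction to the hidden-free case) are routine. $\QED$
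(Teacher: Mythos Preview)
Your proof is correct and takes essentially the same approach as the paper's: both compute $\partial\mathcal{G}(s)/\partial g_i=s_i$ (hence $\partial\log Z/\partial g_i=-\langle s_i\rangle_\pi$), differentiate $D_{KL}$ through the $-\sum_v Q(v)\log\pi_V(v)$ term, and invoke Theorem~\ref{theorem:detailed_balanced_clamping} to identify the fibre conditional $\pi(\cdot\mid v)$ with the clamped stationary law so that the conditional contribution collapses to $\langle s_i\rangle_{\pi_Q}$. The only difference is cosmetic --- you split $\log\pi_V(v)=\log Z_v-\log Z$ and differentiate the two log-partition functions, whereas the paper works directly with $\partial\pi(s)/\partial g_i=(\langle s_i\rangle_\pi-s_i)\pi(s)$ and sums over $h$ --- and your added care about interchanging differentiation with possibly infinite sums and about the operational role of the reachability hypothesis are points the paper passes over silently.
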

\begin{proof}
%Omitted to appendix (\ref{appendix:learning_rule_proof}).
Applying Theorem~\ref{theorem:detailed_balanced_clamping}, the clamped CRN ensemble $\pi_Q(s)$ may be written
\begin{equation}
\pi_Q(s) = \pi_Q(v, h) = Q(v) \pi(h \mid v) = Q(v) \frac{\pi(v, h)}{\sum_{h \in \Omega_{s_0}^H} \pi(v, h)} = Q(v) \frac{\pi(v, h)}{\pi_V(v)} \ .
\end{equation}
Additionally we will need the partial derivative of a Gibbs factor and the partition function with respect to $g_i$,
\begin{equation}
\frac{\partial \e^{-\mathcal{G}(s)}}{\partial g_i} = -s_i \e^{-\mathcal{G}(s)} \quad \textrm{and} \quad \frac{\partial Z}{\partial g_i} = -Z \langle s_i \rangle_{\pi} \ \ .
\end{equation}
Using these results, the partial derivative of any detailed balanced CRN's distribution at a particular state $s$, with respect to an energy $g_i$, is
\begin{equation}
\label{Learning Rule Eq 7}
\frac{\partial \pi(s)}{\partial g_i} = \frac{\partial}{\partial g_i}
\frac{1}{Z} \e^{-\mathcal{G}(s)} =  \langle  s_i \rangle_\pi \pi(s) - s_i \pi(s) \ .
\end{equation}
Noting that $Q$ has no dependence on $g_i$, the gradient of the Kullback-Leibler divergence can then be written,
\begin{align*}
\frac{\partial D_{KL}(Q || \pi_V)}{\partial g_i} 
&= \frac{\partial}{\partial g_i} \sum_{v \in \Omega_{s_0}^V} Q(v) \log \frac{Q(v)}{\pi_V(v)} = \sum_{v \in \Omega_{s_0}^V}  -\frac{Q(v)}{\pi_V(v)} \frac{\partial \pi_V(v)}{\partial g_i} \\
%&= \sum_{v \in \Omega_{s_0}^V}  -\frac{Q(v)}{\pi_V(v)} \sum_{h \in \Omega_{s_0}^H} \pi(v, h) (\langle s_i \rangle_\pi - s_i) \\
&= -\sum_{v \in \Omega_{s_0}^V} \sum_{h \in \Omega_{s_0}^H} \frac{Q(v)}{\pi_V(v)} \pi(v, h) \langle s_i \rangle_\pi  -  \frac{Q(v)}{\pi_V(v)}\pi(v, h) s_i \\
&= -\sum_{(v, h) \in \Omega_{s_0}} \pi_Q(v, h) \langle s_i \rangle_\pi - \pi_Q(v, h) s_i = -\langle s_i \rangle_\pi + \langle s_i \rangle_{\pi_Q} \QED
\end{align*}
\end{proof}
In the special case where there are no hidden species, which is to say the target distribution $Q$ is defined over the whole reachability class $\Omega_{s_0}$, then $\pi_V(v) = \pi(s)$ and $\pi_Q(s) = Q(s)$ and the gradient has the simple form shown in equation (\ref{visible_only_learning_rule}).

% \begin{equation}
% \frac{\partial D_{KL}(Q || \pi)}{\partial g_i} = -\langle s_i \rangle_{\pi} +\langle s_i \rangle_Q \ .
% \end{equation}

Applying gradient descent via $\frac{dg_i}{dt} = -\frac{\partial D_{KL}}{\partial g_i}$, we thus have a simple {\em in silico} training algorithm to train any detailed balanced CRN such that it minimizes the Kullback-Leibler divergence from $\pi_V$ to $Q$. If $H = \emptyset$, simulate the CRN freely to estimate the average counts $\langle s_i \rangle$ under $\pi(s)$. Then compare to the average counts under the target $Q(s)$ and update the species' energies accordingly. If $H \neq \emptyset$, clamp the visible species to some $v \in \Omega_{s_0}^V$ with probability $Q(v)$
and simulate the dynamics of the hidden units. Repeat to sample an ensemble of clamped CRNs $\mathcal{C}|_{V=q}$. Because clamping $v$ preserves reachability, Gillespie simulations of the CRN with the $V$ species clamped to the data values $v$ will sample appropriately. This gives the average counts under $\pi_Q$.

This CBM learning rule is more general than the classical Boltzmann machine learning rule, as it applies to arbitrary detailed balanced CRNs, including those with arbitrary conservation laws and arbitrarily large species counts (but still subject to the constraint that reachability under clamping must be preserved).  That said, at first glance the CBM learning rule appears weaker than the classical Boltzmann machine learning rule, as it depends exclusively on mean values $\langle s_i \rangle$, whereas the Boltzmann machine learning rule relies primarily on second-order correlations $\langle x_i x_j \rangle$.  In fact, though, conservation laws within the CRN can effectively transform mean values into higher-order correlations.  A case in point would be to apply the CBM learning rule to the ECBM network:  For $g_i = G[X_i^{ON}] = -\theta_i$, $\frac{d \theta_i}{dt} = -\frac{d g_i}{dt} = \langle x_i^{ON}\rangle_{\pi_Q} - \langle x_i^{ON} \rangle_{\pi}$, and for $g_i = G[W_{ij}^{ON}] = -w_{ij}$, $\frac{d w_{ij}}{dt} = -\frac{d g_i}{dt} = \langle w_{ij}^{ON}\rangle_{\pi_Q} - \langle w_{ij}^{ON} \rangle_{\pi} = \langle x_i^{ON} x_j^{ON} \rangle_{\pi_Q} - \langle x_i^{ON} x_j^{ON} \rangle_{\pi}$,
which exactly matches the classical Boltzmann machine learning rule if we assert that the energies of $OFF$ species are fixed at zero.

\vspace{-.2cm}
\section{Discussion}
\vspace{-.2cm}

We have given one approximate and two exact constructions that allow CRNs to function as Boltzmann machines. BMs are a ``gold standard" generative model capable of performing numerous  computational tasks and approximating a wide range of distributions. Our constructions demonstrate that CRNs have the same computational power as a BM. In particular, CRNs can produce the same class of distributions and can compute conditional probabilities via the clamping process. Moreover, the TCBM construction appears similar in architecture to protein phosphorylation networks. Both models are non-equilibrium (i.e., require a fuel source) and make use of molecules that have an on/off (e.g., phosphorylated/unphosphorylated) state. Additionally, there are clear  similarities between our exact schemes and combinatorial regulation of genetic networks by transcription factors. In this case, both models make use of combinatoric networks of detailed-balanced interactions (e.g., binding/unbinding) to catalyze a state change in a molecule (e.g., turning a gene on/off). We note that our constructions differ from some biological counterparts in requiring binary molecular counts. However, in some cases we believe that biology may make use of conservation laws (such as having only a single copy of a gene) to allow for chemical networks networks to perform low-cost computations. In the future, we plan to examine these cases in a biological setting as well as generalize our models to higher counts.

Developing these CBMs leads us to an important distinction between equilibrium, detailed-balanced CRNs with steady state distributions determined by molecular energies, and CRNs that do not obey detailed balance in the underlying chemistry. The second category includes those that nonetheless appear detailed balanced at the Markov chain level. Physically, this distinction is especially important: a non-detailed balanced CRN will always require some kind of implicit fuel molecule (maintained by a chemostat) to run and the steady state will not be an equilibrium steady state due to the continuous driving from the fuel molecules. 
%A CRN which is detailed balanced at the level of its Markov chain, but is not detailed balanced at the level of the CRN, will similarly require some kind of implicit fuel molecule. 
A detailed balanced CRN (at the chemical level) requires no fuel molecules: and thus \emph{the chemical circuit can act as a sampler without fuel cost}. Despite this advantage, working with detailed balanced CRNs presents additional challenges: to ensure that chemical species do not have independent distributions, species counts must be carefully coupled via conservation laws. 
%in order to ensure that the reachability class of the CRN is not merely an integer lattice.

%\begin{wraptable}{L}{\textwidth}
%Table Columns: Model, Species, Reactions, Reaction Order, Detailed Balanced
\begin{table}[tb!!!]
\begin{center}
\begin{tabular}[h!]{| c | c | c | c | c | c |}

\hline
\textbf{Model} & \textbf{Species} & \textbf{Reactions} & \textbf{Molecularity}& \textbf{Detailed Balance}\\
\hline
Direct CBM 	& $2N$ 		& $N2^{d+1}$ 		& $d+1$ 		& CTMC  \\
\hline
Edge CBM & $2N + dN$ & $N2^{d+1}$ & $\leq 2d+1$ & CRN and CTMC\\
%\hline
%Bisim. CBM & $2N5^d$ & $2N(8(5)^{d-1}+2^d)$ & $\leq 2$ & CRN and CTMC\\
\hline
Taylor CBM & $2N$ & $2N+dN$ & $\leq 2$ & Neither\\
\hline

\end{tabular}
\caption{The complexity and underlying properties of our constructions for reproducing a BM with $N$ nodes of degree $d$. Detailed balance describes whether the construction is detailed balanced at the CRN level, at the CTMC level, or neither. \label{DiscussionTable} }
\end{center}
\end{table}
%Table Legend: Species is the number of species, $\mathcal{S}$ in the CRN. Reactions is the number of reactions, $\mathcal{R}$. Reaction order is order of the reactions in the CRN. 

%\end{wraptable}

Our constructions also highlight important  complexity issues underlying CBM design. The number of species, the number of reactions, and the reaction molecularity needed to implement a particular BM are relevant. Trade-offs appear to arise between these different factors, and the thermodynamic requirements of a given design. A breakdown of the main features of each CBM is given in Table~\ref{DiscussionTable}. Summarizing, the TCBM is by far the simplest construction, using $\mathcal{O}(N)$ species, at most $\mathcal{O}(N^2)$ reactions, with molecularity $\leq 2$. However, this happens at the expense of not being an exact recreation of a BM, and the requirement of a continuous consumption of fuel molecules. The DCBM is the next simplest in complexity terms, using $\mathcal{O}(N)$ species, $\mathcal{O}(N2^N)$ reactions, and molecularity of at most $N$. Like the TCBM, the DCBM requires fuel molecules because it is not detailed balanced at the CRN level. The ECBM is considerably more complex than the DCBM, using quadratically more species, $\mathcal{O}(N^2)$, the same number of reactions, $\mathcal{O}(N2^N)$ and double the reaction molecularity. The ECBM makes up for this increased complexity by being detailed balanced at the CRN level, meaning that it functions in equilibrium without implicit fuel species. %The BCBM uses exponentially more species and many more reactions than the Edge Species CBM in order to decrease the reaction order to 2 while maintaining all the equilibrium properties of the ECBM.

Finally, we have shown that a broad class of detailed balanced CRNs can be trained using a Hebbian learning rule between a waking phase (clamped) and sleeping phase (free) reminiscent of the gradient descent algorithm for a BM. This exciting finding allows for straightforward optimization of detailed balanced CRNs' distributions.

This work provides a foundation for future investigations of probabilistic molecular computation. In particular, how more general restrictions on reachability classes can generate other ``interesting" distributions in detailed balanced CRNs remains an exciting question. We further wonder if the learning rule algorithm can be generalized to certain classes of non-detailed balanced CRNs. Another immediate question is whether these ideas can be generalized to non-binary molecular counts. From a physical standpoint, plausible implementations of the clamping process and the energetic and thermodynamic constraints require investigation. Indeed, a more realistic understanding of how a CBM might be implemented physically will help us identify when these kinds of inferential computations are being performed in real biological systems and could lead to building a synthetic CBM. 

\vspace{.1cm}
{\bf Acknowledgements.}
This work was supported in part by U.S. National Science Foundation (NSF) graduate fellowships to WP and to AOM, by NSF grant CCF-1317694 to EW, and by the Gordon and Betty Moore Foundation through Grant GBMF2809 to the
Caltech Programmable Molecular Technology Initiative (PMTI).

\bibliography{Bib1}
\bibliographystyle{unsrt}

\section{Appendix}

\vspace{-.2cm}
\subsection{Application of Theorem~\ref{theorem:poisson2}: The Direct CBM Must Use Implicit Fuel Species}
\vspace{-.1cm}
\label{appendix:dcbm_fuel}

Here, we use Theorem~\ref{theorem:poisson2} to analyze the direct implementation of a CBM and show that it cannot be detailed balanced and thereby requires implicit fuel molecules. First, notice that the the conservation laws used in this construction are of a simple form. The states accessible by $(X_i^{ON},X_i^{OFF})$ are independent of $(X_j^{ON},X_j^{OFF})$ for $i \neq j$, and therefore the reachability class is a product over the subspaces of  each individual node. As a consequence, by Theorem~\ref{theorem:poisson2}, the system must be out of equilibrium and violate detailed balance at the  level of the CRN because, by construction, this system is equivalent to a BM and has correlations between nodes $i$ and $j$ whenever $w_{ij} \neq 0$. In physical terms, the presence of catalysts cannot influence the equilibrium yield of a species, and therefore a circuit which uses catalysis to bias distributions of species must be powered by a supply of chemical fuel molecules \cite{Qian2007,Beard2008,Ouldridge2017}. It is also worth noting that, as a consequence, this scheme cannot be implemented by tuning of (free) energies; it is fundamentally necessary to carefully tune all of the rate constants individually (via implicit fuel molecules) to ensure that detailed balance is maintained at the level of the Markov chain for the species of interest.

\vspace{-.2cm}
\subsection{BM Training and TCBM Simulation Details:}
\vspace{-.1cm}
\label{appendix:sim_details}

We trained a BM using stochastic gradient descent on the MNIST dataset, down sampled to be 10 pixels by 10 pixels \cite{LeCun1998}. The BM has 100 visible image units (representing a 10 x 10 image), 10 visible class nodes, and 40 hidden nodes as depicted in Figure~\ref{Fig_CBMInference}B. Our training data consisted of the concatenation of down sampled MNIST images and their classes projected onto the 10 class nodes. The weights and biases of the trained BM were converted to reaction rates for a CBM using the Taylor series approximation. This CBM consists of 300 species, 300 unimolecular reactions and  22350 bimolecular reactions. 
The resulting CBM was then compared side-by-side with the trained BM on image classification and generation. The BM was simulated using custom Gibbs sampling written in Python. The CRN was simulated on a custom Stochastic Simulation Algorithm (SSA)~\cite{gillespie2007stochastic} algorithm written in Cython. All simulations, including network training, were run locally on a notebook or on a single high performance Amazon Cloud server.

Classification was carried out on all 10000 MNIST validation images using both the BM and the CBM. Each 10 by 10 gray-scale image was converted to a binary sample image by comparing the gray-scale image's pixels (which are represented as real numbers between 0 and 1) to a uniform distribution over the same range. The network's image units were then clamped to the binary sample and the hidden units and class units were allowed to reach steady state. This process was carried out 3 times for each MNIST validation image, resulting in 30000 sample images being classified. Raw classification scores were computed by averaging the class nodes' outputs for 20000 simulation steps after 20000 steps of burn-in (Gibbs sampling for the BM, SSA for the CBM). Max classification was computed by taking the most probable class from the raw classification output. Raw classification and max classification confusion heatmaps, showing the average classification across all sample images as a function of the true label are shown in Figure~\ref{Fig_CBMInference} panels C and D for a BM and in Figure~\ref{Fig_CBMInference} panels F and G for a CBM. 

Image generation was carried out by clamping the class nodes with a single class, $0...9$, taking the value of 1 and all other classes being 0, and then allowing the network to reach steady state. Generated images were computed by averaging the image nodes over 50000 simulation steps (Gibbs sampling for the BM, SSA for the CBM) after 25000 steps of burn-in. Generation results are shown in Figure~\ref{Fig_CBMInference}E for a BM and Figure~\ref{Fig_CBMInference}H for a CBM.

\end{document}